\numberwithin{equation}{section}
\newtheorem{theorem}{Theorem}[section]
\newtheorem{assumption}[theorem]{Assumption}
\newtheorem{definition}[theorem]{Definition}
\newtheorem{example}[theorem]{Example}
\newtheorem{lemma}[theorem]{Lemma}
\newtheorem{proposition}[theorem]{Proposition}
\newtheorem{remark}[theorem]{Remark}
\numberwithin{equation}{section}
\newcommand{\nada}[1]{}
\DeclareMathOperator{\avtrco}{\textup{ATC}}
\DeclareMathOperator{\esr}{\textup{ESR}}
\long\def\symbolfootnote[#1]#2{\begingroup\def\thefootnote{\fnsymbol{footnote}}\footnote[#1]{#2}\endgroup}
\title{Almost Perfect Shadow Prices}
\author{Eberhard Mayerhofer\footnote{University of Limerick, Department of Mathematics \& Statistics, V94 T9PX, Limerick, Ireland, \texttt{eberhard.mayerhofer@ul.ie}.}}
\begin{document}
\maketitle

\abstract{
Shadow prices simplify the derivation of optimal trading strategies in markets with transaction costs by transferring optimization into a more tractable, frictionless market. This paper establishes that a na\"ive shadow price Ansatz for maximizing long term returns given average volatility yields a strategy that is, for small bid-ask-spreads, asymptotically optimal at third order. Considering the second-order impact of transaction costs, such a strategy is essentially optimal. However, for risk aversion different from one, we devise alternative strategies that outperform the shadow market at fourth order. Finally, it is shown that the risk-neutral objective rules out the existence of shadow prices.
}

\vspace{0.5cm}
\footnotesize\textsc{Mathematics Subject Classification (2010):} {91G10, 91G80}\\

\footnotesize\textsc{Keywords:} transaction costs; portfolio choice; shadow prices; reflected diffusions
\newpage
\section{Introduction}
\begin{quote}
With a little help of my friends.\footnote{This paper relies on MATHEMATICA (\url{https://www.wolfram.com/mathematica/}) to approximate several trading strategies and performances for small bid-ask spreads. For motivating this research topic, sharing asymptotic methods and providing feedback, I am indebted to Professor Paolo Guasoni.}.
\par\hfill--- The Beatles
\end{quote}
{\it Shadow prices} (or {\it consistent price systems}) are a tool for characterizing the absence of arbitrage opportunities in markets with proportional transaction costs (see, e.g., \cite{kabanov2002no,gr2010,czichowsky2016duality}), 
or for deriving optimal strategies for various objectives (see, e.g., \cite{kallsen2010using,gerhold2013dual,gerhold.al.11,guasonimuhlekarbe,herdegen}). This paper investigates the applicability of shadow prices to the optimisation of long term returns given average volatility.

Strategies that are optimal in frictionless markets\footnote{More generally, the term {\it market frictions} encompasses, e.g. price impact, short-selling constraints, and margin requirements (see \cite{guasonimuhlekarbe}, \cite{weber} and \cite{gmz2023} and the references therein.} such as delta-hedging of European-type options, or constant proportion strategies, lead to immediate bankruptcy under proportional costs.\footnote{Example \ref{ex: swap} below shows this failure for a variance swap hedge.} To ensure solvency, trading frequency needs to be modulated to finite variation, trading as little as necessary to stay close to the target exposures. This paper relates to objectives of long-run investors (\cite{taksar1988diffusion,gerhold.al.11,gm2020,gm2023})\footnote{These references deal with particularly tractable, long-run problems of local or global utility maximisation; however, the first papers in this field, starting with \cite{magill}, where optimal investment and consumption problems on an infinite horizon, which exhibit similar strategies
and asymptotics. For an overview of this research field, see \cite[Chapter 1]{gm2020} and \cite{guasonimuhlekarbe}.}, who consider it optimal to keep the fraction of wealth $\pi$ invested in the risky asset within an interval around a target exposure, by engaging only in trading whenever this fraction hits the boundaries of the interval.\footnote{For these strategies, the name ``control limit policy" from \cite{taksar1988diffusion} is adopted, see Definition \ref{def: controllable} below.}  For example, for constant investment opportunities and a sufficiently small relative bid-ask spread $\varepsilon$, the trading boundaries $\pi_-<\pi_+$
of an investor with risk-aversion $\gamma$ are approximately
\begin{equation}\label{eq: trafosx}
\pi_\pm=\pi_*\pm\left(\frac{3}{4\gamma}\pi_*^2(\pi_*-1)^2\right)^{1/3}\varepsilon^{1/3},
\end{equation}
where $\pi_*=\frac{\mu}{\gamma \sigma^2}$ is the well-known Merton fraction, $\mu$ being the annualized average return of the risky asset, and $\sigma$ its volatility.

Absurdly, such finite variation strategies are mathematically more challenging than the infinite variation strategies typically encountered in frictionless markets. Shadow prices allow to transfer optimization into a more tractable, frictionless (but fictitious) market. A shadow price is  a frictionless asset that evolves in the bid-ask spread of the risky asset and for which the optimal strategy buys (resp. sells) whenever its price agrees with the ask (resp. bid) of this risky asset. For objectives which are monotone functions of wealth, such as power utility, the strategy in the shadow market is also optimal in the original market, because by trading in the shadow market, the investor is generally better off. Furthermore, shadow markets provide an elegant, intuitive derivation of optimal trading policies for different objectives and market models. It therefore may come as surprise that \cite{gm2020,gm2023} use the more traditional Hamilton-Jacobi-Bellman equations both for the heuristic derivation of the candidate optimal control limit policy (with asymptotics \eqref{eq: trafosx}) {\it and} the verification of optimality. This is even more surprising, as the respective objectives lend themselves to very tractable candidate shadow prices and trading strategies (see Section \ref{sec: shadow} below) . However, the local mean variance criterion is, in general,
not monotone in wealth.\footnote{When $\gamma=1$, the local-mean variance objective agrees with logarithmic utility, for which monotonicity holds and the shadow market strategy is the optimal one cf.~\cite{gerhold2013dual}.} Therefore, verification of optimality fails, leaving open the question, whether trading strategies derived in the shadow market are also optimal in the original market with transaction costs.

\cite{gm2020} show that, in the presence of transaction costs, maximizing returns is well-posed even without controlling for volatility - transaction costs act as a penalty in the objective. As a consequence, the efficient frontier is not a straight line as in the classical Merton problem, but reaches a maximum for finite volatility, after which taking on even further risk may result in negative alpha. However, in frictionless markets, such an objective gives the incentive to seek arbitrary leverage, unless the asset has zero expected excess return. Thus shadow prices are destined to fail as an optimisation tool. 

Nevertheless, a candidate shadow price can be found for a risk-averse investor. A construction similar to \cite{gerhold2013dual} yields trading policies of the form \eqref{eq: trafosx}, thus they are indistinguishable at first order from the optimal one. Moreover, at second order, they are distinguished by a mere change of sign in the second order coefficient. Even more surprisingly, the equivalent safe rate of the shadow price trading strategy agree at third order with the maximum. In view of the second-order impact of transaction costs it is essentially optimal. However, we devise trading policies that strictly outperform the shadow price trading strategy at fourth order.

\subsection*{Program of Paper}

The paper is structured as follows: Section \ref{sec: the model} presents the market model, encompassing a risky Black-Scholes asset with transaction costs, the mean-variance objective, and a recap of the optimal trading policies established in \cite{gm2020}. Section \ref{sec: adm and perf} introduces control limit policies, evaluating their long-run performance along with small-transaction costs asymptotics (Lemma \ref{prop: performance controllable strategy}). In Section \ref{sec: shadow}, a na\"ive Ansatz for a shadow price is proposed, and asymptotic expansions of the trading boundaries are provided. Theorem \ref{lem perf optx} demonstrates their third order asymptotically optimality and Theorem \ref{thm: subopt} establishes their strict sub-optimality. Section \ref{sec: obstacle} provides a rigorous proof that for maximizing expected returns without controlling for volatility, no shadow price exists (Theorem \ref{th: no shadow price}). The final section \ref{sec: conclusion} summarizes our findings and points out directions for future research. The Appendix computes a high-order approximation of the candidate shadow price, to support the proof of Theorem \ref{thm: subopt}.

\section{Materials and Methods}\label{sec: the model}

The market model is comprised of two assets: a safe safe asset that is continuously compounded at a constant rate of $r\ge 0$ and a risky asset $S$ purchased at its ask price $S_t$ and satisfying the dynamics
\begin{equation}\label{eq: ask dynamics}
\frac{dS_t}{S_t}=(\mu+r)dt+\sigma dB_t, \quad S_0, \sigma,\mu > 0,
\end{equation}
where $B$ is a standard Brownian motion. The risky asset's bid (selling) price is $(1-\varepsilon)S_t$, which implies a constant relative bid-ask spread of $\varepsilon>0$, or, equivalently, constant proportional transaction costs.

Let $w$ be the wealth associated with a self-financing trading strategy\footnote{For the dynamics of the wealth process, see Lemma 2.3 below.}. The mean-variance trade-off is captured by maximizing 
the equivalent safe rate, 
\begin{align}\label{eq: obj asympt}
\esr &:= \limsup_{T\rightarrow\infty} \frac1 T \mathbb E\left[ \int_0^T \frac{dw_t}{w_t}-\frac\gamma 2 
\left\langle \int_0^\cdot 
\frac {dw_t}{w_t}\right\rangle_T
\right].
\end{align}
With $\pi$, the proportion of wealth invested in the risky asset, and with $\varphi_t$, the number of shares $\varphi_t=\varphi_t^\uparrow-\varphi_t^\downarrow$
being the difference of purchases $\varphi_t^\uparrow$ minus sales $\varphi_t^\downarrow$ one can rewrite the objective\footnote{This follows from the respective finite-horizon objective \eqref{eq main problem 1}, expressed in terms of $\pi_t$ and $\varphi_t$, see Lemma 2.3.}
as follows
\begin{align}
\esr :=&
r+\limsup_{T\rightarrow\infty}\frac{1}{T}\mathbb E\left[\int_0^T \left(\mu \pi_t-\frac{\gamma \sigma^2}{2}\pi_t^2\right)dt-\varepsilon\int_0^T\pi_t\frac{d\varphi^\downarrow_t}{\varphi_t} \right].
\end{align}
In the absence of transaction costs ($\varepsilon=0$), the objective is maximized by the constant-proportion portfolio $\pi_*:= \frac{\mu}{\gamma \sigma^2}$ dating back to Markowitz and Merton. The risk-neutral objective $\gamma = 0$ reduces to the average annualized return over a long horizon, which is well posed
for transaction costs \cite[Theorem 3.2]{gm2020}, but meaningless in the traditional framework with zero bid-ask spread, where a strategy can arbitrarily levered. The case $\gamma=1$ reduces to logarithmic utility, which is solved by the \cite{taksar1988diffusion} for the unlevered case $\frac{\mu}{\gamma\sigma^2}<1$.

An optimal strategy maximizing the equivalent safe rate exists. The following is a shortened version of \cite[Theorem 3.1]{gm2020}, characterising optimality:
\begin{theorem}\label{thm: gm20}
\noindent
Let $\frac{\mu}{\gamma\sigma^2}\neq 1$.
\begin{enumerate}
\item \label{thm free b part 1}
For any $\gamma>0$ there exists $\varepsilon_0>0$ such that for all $\varepsilon<\varepsilon_0$, 
there is a unique solution $(W,\zeta_-,\zeta_+)$, with $\zeta_-<\zeta_+$, for the free boundary problem 
\begin{align}\label{eq: TKA fbp}
&\textstyle{\frac{1}{2}\sigma^2 \zeta^2 W''(\zeta)+(\sigma^2+\mu)\zeta W'(\zeta)+\mu W(\zeta)-\frac{1}{(1+\zeta)^2}\left(\mu-\gamma\sigma^2\frac{\zeta}{1+\zeta}\right)=0,}\\\label{initial0 TKA}
&\textstyle{W(\zeta_-)=0}\\\label{initial1 TKA}
&\textstyle{W'(\zeta_-)=0,}\\\label{terminal0 TKA}
&\textstyle{W(\zeta_+)=\frac{\varepsilon}{(1+\zeta_+)(1+(1-\varepsilon)\zeta_+)},}\\
\label{terminal1 TKA}
&\textstyle{W'(\zeta_+)=\frac{\varepsilon(\varepsilon-2(1-\varepsilon)\zeta_+-2)}{(1+\zeta_+)^2(1+(1-\varepsilon)\zeta_+)^2}}
\end{align}

\item \label{thm free b part 2}
The trading strategy that buys at $\pi_- := \zeta_-/(1+\zeta_-)$ and sells at $\pi_+ := \zeta_+/(1+\zeta_+)$ as little as to keep the risky weight $\pi_t$ within the interval $[\pi_-,\pi_+]$ is optimal.

\item \label{thm free b part 3}
The maximum performance is 
\begin{equation}\label{eq: max perf}
\textstyle{\max_{\varphi\in\Phi}\lim_{T\rightarrow\infty}
\frac1T 
\mathbb E\left[
\int_0^T \left( \mu \pi_t - \frac{\gamma\sigma^2}{2} \pi_t^2 \right)dt
-{\varepsilon}\int_0^T\pi_t\frac{d\varphi^\downarrow_t}{\varphi_t} 
\right]
= \mu \pi_- -\frac{\gamma \sigma^2}{2}\pi_-^2,}
\end{equation}
where $\Phi$ is the set of admissible strategies in Definition \ref{def: admiss} below, $\varphi_t = \pi_t w_t/S_t$ is the number of shares held at time $t$, and $\varphi^\downarrow_t$ is the cumulative number of shares sold up to time $t$.

\item\label{thm free b part 4} 
The trading boundaries $\pi_-$ and $\pi_+$ have the asymptotic expansions
\begin{equation}\label{eq: trading boundaries TKA x}
\textstyle{\pi_\pm=\pi_*\pm\left(\frac{3}{4\gamma}\pi_*^2(\pi_*-1)^2\right)^{1/3}\varepsilon^{1/3}-\frac{(1-\gamma)\pi_*}{\gamma}\left(\frac{\gamma \pi_*(\pi_*-1)}{6}\right)^{1/3}\varepsilon^{2/3} +O(\varepsilon).}
\end{equation}
\item The equivalent safe rate ($\esr$) has the expansion
\begin{equation}\label{eq: asym esr}
\textstyle{\esr} =\textstyle{r+\frac{\gamma \sigma^2}{2}\pi_*^2-\frac{\gamma \sigma^2}{2}\left(\frac{3}{4 \gamma}\pi_*^2(\pi_*-1)^2\right)^{2/3}\varepsilon^{2/3}+ O(\varepsilon).}
\end{equation}
\end{enumerate}
\end{theorem}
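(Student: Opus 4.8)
As this is \cite[Theorem 3.1]{gm2020}, I only outline the architecture of the argument. \emph{Free boundary problem and boundary asymptotics (parts (i),(iv)).} Equation \eqref{eq: TKA fbp} is a linear inhomogeneous ODE of Euler (Cauchy) type, so its general solution is $W=W_{p}+c_{1}\zeta^{\lambda_{1}}+c_{2}\zeta^{\lambda_{2}}$, with $\lambda_{1,2}$ the roots of the indicial polynomial $\tfrac12\sigma^{2}\lambda(\lambda-1)+(\sigma^{2}+\mu)\lambda+\mu$ and $W_{p}$ a particular solution obtained by variation of parameters. Imposing \eqref{initial0 TKA}--\eqref{initial1 TKA} at $\zeta_{-}$ determines $(c_{1},c_{2})$ as smooth functions of $\zeta_{-}$, so that \eqref{terminal0 TKA}--\eqref{terminal1 TKA} become two scalar equations $F(\zeta_{-},\zeta_{+},\varepsilon)=0$. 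At $\varepsilon=0$ these force $\zeta_{-}=\zeta_{+}=\zeta_{*}:=\pi_{*}/(1-\pi_{*})$, so the system degenerates at the origin; I would resolve this with the ansatz $\zeta_{\pm}=\zeta_{*}\pm y_{\pm}\varepsilon^{1/3}+z_{\pm}\varepsilon^{2/3}+\cdots$, whose leading-order balance reproduces the cube-root law familiar from \cite{gerhold.al.11,gm2020}, and, after dividing $F$ by the relevant powers of $\varepsilon^{1/3}$, apply the implicit function theorem to obtain for $\varepsilon<\varepsilon_{0}$ a unique branch with $\zeta_{-}<\zeta_{+}$. Pushing the expansion one order further and substituting $\pi=\zeta/(1+\zeta)$ gives \eqref{eq: trading boundaries TKA x}.

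\emph{Verification and performance (parts (ii),(iii)).} Set $\beta:=\mu\pi_{-}-\tfrac{\gamma\sigma^{2}}{2}\pi_{-}^{2}$ and let $g$ be a primitive of $W$ on $[\zeta_{-},\zeta_{+}]$ (up to sign and an additive constant), normalised so that \eqref{eq: TKA fbp} is the once-differentiated ergodic HJB equation $\mathcal{L}g+\big(\mu\pi-\tfrac{\gamma\sigma^{2}}{2}\pi^{2}\big)=\beta$ on the no-trade region, $\mathcal{L}$ being the generator of $\zeta$ under \eqref{eq: ask dynamics}. Extend $g$ beyond $[\zeta_{-},\zeta_{+}]$ by the affine continuations prescribed by the bid-ask gradient bounds; then \eqref{initial0 TKA}--\eqref{terminal1 TKA} are exactly value-matching and smooth fit, and they make the extension $C^{1}$ while turning the HJB equation into the global variational inequality $\mathcal{L}g+\big(\mu\pi-\tfrac{\gamma\sigma^{2}}{2}\pi^{2}\big)\le\beta$ together with the one-sided ``no overtrading'' inequalities. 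Applying It\^o's formula to $g(\zeta_{t})$ along an arbitrary $\varphi\in\Phi$, dividing by $T$ and letting $T\to\infty$, the drift contributes at most $\beta$, the cumulative-trading term is non-positive because it is paired with the correct sign in the extension, and the martingale part vanishes after normalisation thanks to admissibility (Definition \ref{def: admiss}); hence $\esr\le r+\beta$. For the control limit policy at $(\pi_{-},\pi_{+})$ the process $\zeta_{t}$ is a diffusion on $[\zeta_{-},\zeta_{+}]$ with instantaneous reflection at both endpoints, hence positive recurrent with an explicit stationary density; all the above inequalities become equalities, the reflection (local time) terms telescope through the boundary conditions, and one gets $\esr=r+\beta$. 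This is optimality (ii) and, stripped of the constant $r$, the formula \eqref{eq: max perf} of (iii).

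\emph{ESR asymptotics (part (v)).} By (iii), $\esr=r+\mu\pi_{-}-\tfrac{\gamma\sigma^{2}}{2}\pi_{-}^{2}$. Writing $\pi_{-}=\pi_{*}-a\varepsilon^{1/3}+O(\varepsilon^{2/3})$ with $a=\big(\tfrac{3}{4\gamma}\pi_{*}^{2}(\pi_{*}-1)^{2}\big)^{1/3}$ and noting that $\pi\mapsto\mu\pi-\tfrac{\gamma\sigma^{2}}{2}\pi^{2}$ has a stationary point at $\pi_{*}$ with second derivative $-\gamma\sigma^{2}$, the $\varepsilon^{1/3}$ term cancels and the $\varepsilon^{2/3}$ coefficient equals $-\tfrac{\gamma\sigma^{2}}{2}a^{2}$, i.e.\ \eqref{eq: asym esr}; note that the $\varepsilon^{2/3}$ coefficient of $\pi_{-}$ is irrelevant here precisely because $\pi_{*}$ is critical.

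\emph{Main difficulty.} The delicate step is verification: since $\mu\pi-\tfrac{\gamma\sigma^{2}}{2}\pi^{2}$ is \emph{not} monotone in wealth, one cannot deduce optimality from a shadow-price domination, and must instead build the global subsolution $g$ by hand and control the long-run average for \emph{every} admissible competitor -- in particular justify the vanishing of the normalised martingale and reflection terms without presuming the competitor is Markovian. That obstruction is exactly what the rest of the paper examines.
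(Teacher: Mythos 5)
The paper does not prove this theorem itself — it is stated as a shortened version of \cite[Theorem 3.1]{gm2020} and the proof is delegated to that reference — and your sketch faithfully reproduces the architecture of the cited argument: Euler-type linear ODE for $W$ with shooting from $\zeta_-$, a degenerate implicit-function-theorem/cube-root expansion for the boundaries, an HJB/variational-inequality verification via It\^o's formula and ergodic limits for arbitrary admissible competitors, and the observation that criticality of $\pi_*$ makes the $\varepsilon^{1/3}$ term cancel in the ESR expansion. So the proposal is correct in outline and takes essentially the same route as the source the paper relies on; nothing to flag.
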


\subsection{Admissible strategies and their long-run performance}\label{sec: adm and perf}
In view of transaction costs, only finite-variation trading strategies are consistent with solvency. This is illustrated
by the following example:
\begin{example}\label{ex: swap}
Consider the dynamic hedging part of $1/\varepsilon$ variance swaps\footnote{It is well-known that a variance swap with maturity $T$ on a continuous semimartingale $S$ can be perfectly hedged by holding $2/(T S_t)$ units of the underlying at time $t\leq T$ (the dynamic hedging term), and a static portfolio of European puts and calls with expiry $T$, \cite{vswaps}.} on the asset $S$ with maturity $T=2$, that requires to hold
\[
\varphi_t=\frac{1}{\varepsilon S_t}
\]
units of the underlying at each time $t\geq 0$. Trading discretely, along a mesh of size $\Delta$, one needs to sell at $t+\Delta$ if and only if $S_{t+\Delta}>S_t$, which incurs a cost of
\[
\varepsilon \times 1/\varepsilon \times S_{t+\Delta}(1/S_t-1/S_{t+\Delta})=(S_{t+\Delta}/S_t-1).
\]
Let $x_+=\max(0,x)$ and $N=T/\Delta$, then the total transaction cost amounts to
\[
C_N=\sum_{i=0}^{N-1} (S_{(i+1)\Delta}/S_{i\Delta}-1)_+.
\]
Note that this sum counts all positive simple returns of the asset, which can be approximated by logarithmic returns. Thus, as $N\rightarrow \infty$,
$C_N\rightarrow C$, the semivariation of a Brownian motion $B$ with drift, 
\[
C=\lim_{\Delta t\rightarrow 0}\sum_{i=0}^{T/\Delta t-1}(B_{(i+1)\Delta t}-B_{i \Delta t})_+=\infty,
\]
almost surely. This shows that, under proportional transaction costs, such a dynamic trading strategy results in immediate bankruptcy.
\end{example}
Denote by $X_t$ and $Y_t$ the wealth in the safe and risky positions respectively, and by $(\varphi_t^\uparrow)_{t\ge 0}$ and $(\varphi_t^\downarrow)_{t\ge 0}$ the cumulative number of shares bought and sold, respectively. The self-financing condition prescribes that $(X,Y)$ satisfy the dynamics
\begin{equation}\label{eq: sf1}
dX_t = rX_tdt-S_td\varphi^\uparrow_t+(1-\varepsilon) S_t d\varphi^\downarrow_t,\quad dY_t = S_t d\varphi^\uparrow_t- S_td \varphi^\downarrow_t+\varphi_t dS_t
.
\end{equation}

A strategy is admissible if it is non-anticipative and solvent, up to a small increase in the spread:
\begin{definition}\label{def: admiss}
Let $x>0$ (the initial capital) and let $(\varphi_t^\uparrow)_{t\ge 0}$ and $(\varphi_t^\downarrow)_{t\ge 0}$ be continuous, increasing processes, adapted to the augmented natural filtration of $B$. Then $(x, \;\varphi_t=\varphi_t^\uparrow-\varphi_t^\downarrow)$ is an \emph{admissible trading strategy} if 
\begin{enumerate}
\item its liquidation value is strictly positive at all times:
There exists $\varepsilon'>\varepsilon$ such that the discounted asset $\widetilde S_t:=e^{-rt}S_t$ satisfies
\begin{equation}\label{eq: liquidity}
x - \int_0^t \widetilde S_s d\varphi_s +\widetilde S_t \varphi_t - \varepsilon'\int_0^t \widetilde S_s d\varphi^\downarrow_s -\varepsilon'\varphi_t^+ \widetilde S_t > 0\qquad\text{a.s. for all }t\ge 0.
\end{equation}
 \item\label{part2 Lemma A21} 
The following integrability conditions hold:
\begin{equation}\label{eq: admissible trading}
\mathbb E\left[\int_0^t \vert\pi_u\vert^2 du\right]<\infty,\quad\mathbb E\left[\int_0^t \pi_u \frac{d\|\varphi_u\|}{\varphi_u}\right]<\infty \quad\text{ for all }t\ge 0,
\end{equation}
where $\|\varphi_t\|$ denotes the total variation of $\varphi$ on $[0,t]$. 
\end{enumerate}
The family of admissible trading strategies is denoted by $\Phi$.
\end{definition}

The following lemma describes the dynamics of the wealth process $w_t$, the risky weight $\pi_t$, and the risky-safe ratio $\zeta_t$. 
\begin{lemma}\label{le: rewriting obj fun}
For any admissible trading strategy $\varphi$:
\begin{align}\label{eq zeta diff}
\frac{d\zeta_t}{\zeta_t}&=\mu dt +\sigma dB_t+(1+\zeta_t)\frac{d\varphi_t^\uparrow}{\varphi_t}    -(1+(1-\varepsilon)\zeta_t)\frac{d\varphi_t^\downarrow}{\varphi_t},\\\label{eq w diff}
\frac{dw_t}{w_t}&=r dt+\pi_t(\mu dt+\sigma dB_t-\varepsilon \frac{d\varphi_t^\downarrow}{\varphi_t}),\\\label{eq pi diff}
\frac{d\pi_t}{\pi_t}&=(1-\pi_t)(\mu dt+\sigma dB_t)-\pi_t(1-\pi_t)\sigma^2 dt+\frac{d\varphi_t^\uparrow}{\varphi_t}-(1-\varepsilon\pi_t)\frac{d\varphi_t^\downarrow}{\varphi_t}.
\end{align}
For any such strategy, the functional
\begin{equation}\label{eq: finite mean var fun}
F_T(\varphi):=
\frac1 T \mathbb E\left[ \int_0^T \frac{dw_t}{w_t}-\frac\gamma 2\left\langle  \int_0^T \frac{d w_t}{w_t}\right\rangle_T\right]
\end{equation}
can be rewritten as
\begin{equation}\label{eq main problem 1}
F_T(\varphi)=r+\frac{1}{T}\mathbb E\left[\int_0^T \left(\mu \pi_t-\frac{\gamma \sigma^2}{2}\pi_t^2\right)dt-\varepsilon\int_0^T\pi_t\frac{d\varphi^\downarrow_t}{\varphi_t} \right].
\end{equation}
\end{lemma}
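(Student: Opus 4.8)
The plan is to derive all three stochastic differentials by applying It\^o's formula to the self-financing system \eqref{eq: sf1}, and then to read off \eqref{eq main problem 1} from \eqref{eq w diff} after taking expectations. Set $Y_t:=\varphi_tS_t$ for the value of the risky position and $w_t:=X_t+Y_t$ for the book value of the portfolio. Admissibility---in particular \eqref{eq: liquidity} with $\varepsilon'>\varepsilon$, all of whose subtracted terms are nonnegative---forces $w_t>0$ and $X_t>0$, so $\zeta_t=Y_t/X_t$ and $\pi_t=Y_t/w_t=\zeta_t/(1+\zeta_t)$ are well defined; and since $\varphi^\uparrow,\varphi^\downarrow$ are continuous and of finite variation, $B$ is the only source of quadratic variation, so that $d\langle X\rangle=d\langle X,Y\rangle=0$ and $d\langle\zeta\rangle_t=\sigma^2\zeta_t^2\,dt$.

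Adding the two equations in \eqref{eq: sf1} gives $dw_t=rX_t\,dt-\varepsilon S_t\,d\varphi_t^\downarrow+\varphi_t\,dS_t$; substituting $\varphi_t\,dS_t=Y_t\bigl((\mu+r)\,dt+\sigma\,dB_t\bigr)$ and $S_t/w_t=\pi_t/\varphi_t$ and dividing by $w_t$ yields \eqref{eq w diff}. For \eqref{eq zeta diff}, the finite variation of $X$ makes the It\^o correction in the quotient rule vanish, so $d\zeta_t=dY_t/X_t-(Y_t/X_t^2)\,dX_t$; inserting \eqref{eq: sf1}, using again $\varphi_t\,dS_t=Y_t((\mu+r)\,dt+\sigma\,dB_t)$ together with $S_t/X_t=\zeta_t/\varphi_t$, and collecting the $d\varphi^\uparrow$ and $d\varphi^\downarrow$ terms, gives the claimed dynamics of $\zeta$.

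Equation \eqref{eq pi diff} then follows by applying It\^o's formula to $\pi_t=f(\zeta_t)$ with $f(\zeta)=\zeta/(1+\zeta)$, $f'(\zeta)=(1+\zeta)^{-2}$, $f''(\zeta)=-2(1+\zeta)^{-3}$, and translating every $\zeta$ via the identities $\zeta/(1+\zeta)=\pi$, $1/(1+\zeta)=1-\pi$, $\bigl(1+(1-\varepsilon)\zeta\bigr)/(1+\zeta)=1-\varepsilon\pi$. Finally, integrating \eqref{eq w diff} gives $\int_0^T\frac{dw_t}{w_t}=rT+\mu\int_0^T\pi_t\,dt+\sigma\int_0^T\pi_t\,dB_t-\varepsilon\int_0^T\pi_t\,\frac{d\varphi_t^\downarrow}{\varphi_t}$ and $\langle\int_0^\cdot\frac{dw_t}{w_t}\rangle_T=\sigma^2\int_0^T\pi_t^2\,dt$; substituting into \eqref{eq: finite mean var fun} and taking expectations, the stochastic integral term $\mathbb E[\sigma\int_0^T\pi_t\,dB_t]$ drops out, leaving exactly \eqref{eq main problem 1}.

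The computations are routine It\^o calculus with finite-variation integrators; the one genuine point to check is that $\int_0^\cdot\pi_t\,dB_t$ is a true martingale, which is precisely guaranteed by the bound $\mathbb E[\int_0^T|\pi_u|^2\,du]<\infty$ in \eqref{eq: admissible trading} (it makes the integral $L^2$-bounded, hence of zero expectation), while the second bound there ensures the transaction-cost term has finite expectation. It is also worth recording that taking $w=X+Y$ together with self-financing is exactly what makes the logarithmic increments appearing in \eqref{eq: obj asympt} coincide with the $dw_t/w_t$ used above, so that \eqref{eq main problem 1} indeed represents the original objective.
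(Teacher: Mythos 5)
Your derivation is correct, and it in fact supplies the computation that the paper itself outsources: the paper's ``proof'' of this lemma is only the citation \cite[Lemma A.2]{gm2020}, and what you do---sum the self-financing equations \eqref{eq: sf1} to get the wealth dynamics \eqref{eq w diff}, apply the quotient rule (with vanishing covariation, $X$ being continuous and of finite variation) to $\zeta=Y/X$ for \eqref{eq zeta diff}, push $\zeta\mapsto\zeta/(1+\zeta)$ through It\^o's formula for \eqref{eq pi diff}, and then integrate \eqref{eq w diff}, using the square-integrability condition in \eqref{eq: admissible trading} to kill the expectation of the stochastic integral and the second condition to ensure the cost term is integrable, to obtain \eqref{eq main problem 1}---is exactly the standard argument behind that reference. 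One small inaccuracy should be fixed: the solvency condition \eqref{eq: liquidity} does imply $w_t>0$ (the subtracted cost terms are nonnegative once one writes $e^{-rt}X_t$ explicitly), but it does \emph{not} force $X_t>0$; in the levered case $\pi_t>1$, which the paper explicitly allows ($\zeta_-<\zeta_+<-1$), one has $X_t<0$. This does not affect any of your computations---they only require $w_t>0$ together with $X_t\neq0$ and $\varphi_t\neq0$ so that the ratios $\zeta_t$, $\pi_t$, $d\varphi_t/\varphi_t$ are well defined, which is tacitly assumed in the statement of the lemma---so simply drop or weaken the claim ``$X_t>0$'' rather than lean on it.
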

\begin{proof}
See \cite[Lemma A.2]{gm2020}.
\end{proof}

\begin{lemma}\label{lem: existence controllable strategy}
Let $\eta_-<\eta_+$ be such that either $\eta_+<-1/(1-\varepsilon)$ or $\eta_->0$. Then there exists an admissible trading strategy ${\hat\varphi}$ such that the risky-safe ratio $\eta_t$
satisfies SDE \eqref{eq zeta diff}. Moreover, $(\eta_t,{\hat\varphi}_t^\uparrow,{\hat\varphi}_t^\downarrow)$ is a reflected diffusion on
the interval $[\eta_-,\eta_+]$. In particular, $\eta_t$ has stationary density equals
\begin{equation}\label{eq st de 1}
 \nu(\eta):=\frac{\frac{2\mu}{\sigma^2}-1}{\eta_+^{\frac{2\mu}{\sigma^2}-1}-\eta_-^{\frac{2\mu}{\sigma^2}-1}}\eta^{\frac{2\mu}{\sigma^2}-2}, \quad \eta\in[\eta_-,\eta_+],
\end{equation}
when $\eta_->0$, and otherwise equals
\begin{equation}\label{eq st de 2}
 \nu(\eta):=\frac{\frac{2\mu}{\sigma^2}-1}{\vert\eta_-\vert^{\frac{2\mu}{\sigma^2}-1}-\vert\eta_+\vert^{\frac{2\mu}{\sigma^2}-1}}\vert\eta\vert^{\frac{2\mu}{\sigma^2}-2}, \quad \eta\in[\eta_-,\eta_+].
\end{equation}
\end{lemma}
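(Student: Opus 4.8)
The plan is to construct the strategy $\hat\varphi$ explicitly from a reflected diffusion and then verify it satisfies all the requirements. First I would set up the reflected diffusion: on the interval $[\eta_-,\eta_+]$ (which, by hypothesis, either lies entirely in $(0,\infty)$ or entirely in $(-\infty,-1/(1-\varepsilon))$, so in particular $0\notin[\eta_-,\eta_+]$ and the map $\eta\mapsto\log|\eta|$ is smooth there), apply It\^o's formula to guess that $L_t:=\log|\eta_t|$ should be a Brownian motion with drift $(\mu-\sigma^2/2)\,dt+\sigma\,dB_t$ reflected between $\log|\eta_-|$ and $\log|\eta_+|$ (note that when $\eta_+<-1/(1-\varepsilon)<0$ we have $|\eta_-|>|\eta_+|$, so the roles of the endpoints swap). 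The existence and pathwise uniqueness of a reflected diffusion on a bounded interval with two reflecting barriers is classical (Skorokhod problem with two boundaries, e.g. Tanaka / Lions--Sznitman); this yields continuous, nondecreasing local-time processes $\ell^-_t,\ell^+_t$ increasing only when $L_t$ is at the lower resp.\ upper boundary. I would then set $\eta_t:=\sgn(\eta_-)\,e^{L_t}$ and read off, via It\^o, the dynamics of $\eta_t$, matching them to \eqref{eq zeta diff} to identify the cumulative purchase and sale processes $\hat\varphi^\uparrow,\hat\varphi^\downarrow$ in terms of $\ell^-_t,\ell^+_t$.

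The delicate bookkeeping step is translating local time of $L$ into the share processes. From \eqref{eq zeta diff}, at the lower boundary $\eta=\eta_-$ the ratio is pushed upward, which (if $\eta_->0$) corresponds to buying, i.e.\ $d\hat\varphi^\uparrow_t/\varphi_t = \frac{1}{1+\eta_-}\,d\ell^-_t$ up to the It\^o correction from $\log|\eta|$; at the upper boundary $\eta=\eta_+$ one is selling, $d\hat\varphi^\downarrow_t/\varphi_t=\frac{1}{1+(1-\varepsilon)\eta_+}\,d\ell^+_t$. In the short-position regime $\eta_+<-1/(1-\varepsilon)$ the signs of $1+\eta_-$ and $1+(1-\varepsilon)\eta_+$ need to be checked to confirm that $\hat\varphi^\uparrow,\hat\varphi^\downarrow$ are genuinely nondecreasing; this is exactly where the hypothesis $\eta_+<-1/(1-\varepsilon)$ (rather than merely $\eta_+<0$) is used, ensuring $1+(1-\varepsilon)\eta_+<0$ so that reflection off the upper barrier has the correct trading sign. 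Having defined $\hat\varphi$, I would recover $\varphi_t$ itself (its magnitude) by integrating $d\varphi_t/\varphi_t$, and set the initial share count from the initial capital $x$; then $\zeta_t\equiv\eta_t$ by construction.

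Next comes admissibility. Condition \eqref{eq: liquidity}: since $\eta_t$ stays in a fixed compact set bounded away from the singularities $\{0,-1/(1-\varepsilon')\}$, the risky weight $\pi_t=\eta_t/(1+\eta_t)$ is bounded, and the liquidation value can be bounded below by a deterministic positive constant times $\widetilde S_t\varphi_t^{\pm}$-type terms; one picks $\varepsilon'\in(\varepsilon,\varepsilon_0)$ close enough to $\varepsilon$ that the compact interval still avoids $-1/(1-\varepsilon')$, and the strict positivity follows. Condition \eqref{eq: admissible trading}: boundedness of $\pi_t$ gives $\mathbb E\int_0^t|\pi_u|^2du<\infty$ immediately, and for the second integral one needs $\mathbb E[\ell^-_t+\ell^+_t]<\infty$, i.e.\ finite expected local time over finite horizons, which is standard for reflected diffusions (e.g.\ from the expected-occupation-time formula or by taking expectations in the Skorokhod decomposition and using that $L_t$ has bounded expected increments).

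Finally, the stationary-density claim. Here $\eta_t$ (or equivalently $L_t$) is a one-dimensional diffusion reflected on a bounded interval, hence positive recurrent with a unique invariant probability. I would compute the speed measure and scale function of the generator $\mathcal A f = \tfrac12\sigma^2\eta^2 f'' + \mu\eta f'$ of the unreflected dynamics of $\eta_t$, then use the classical fact that the stationary density of a reflected diffusion on $[\eta_-,\eta_+]$ is the normalized speed density. The speed density of $\mathcal A$ is proportional to $\eta^{2\mu/\sigma^2-2}$ (after the usual computation $m(\eta)\propto \frac{1}{\sigma^2\eta^2 s'(\eta)}$ with scale density $s'(\eta)\propto \eta^{-2\mu/\sigma^2}$), and normalizing over $[\eta_-,\eta_+]$ produces exactly \eqref{eq st de 1} when $\eta_->0$; the case $\eta_-<\eta_+<0$ is identical after substituting $|\eta|$ and gives \eqref{eq st de 2}, with the sign of the normalizing constant flipping because $|\eta_-|>|\eta_+|$ there.

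The main obstacle I anticipate is not any single estimate but the careful sign analysis in the short-selling regime: making sure that the reflection directions in \eqref{eq zeta diff}, once expressed through $\ell^\pm$, yield nondecreasing $\hat\varphi^\uparrow$ and $\hat\varphi^\downarrow$, and that the liquidation constraint \eqref{eq: liquidity} is genuinely satisfied when the investor is short the risky asset. Everything else is a routine application of the Skorokhod problem and standard one-dimensional diffusion theory, most of which can be cited rather than reproved.
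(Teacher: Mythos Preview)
The paper does not actually prove this lemma: its entire proof reads ``See \cite[Lemma~B.5]{gm2020}.'' So there is no in-paper argument to compare against. Your sketch is sound and is, in fact, the standard route one would expect the cited reference to take: reduce to a reflected Brownian motion with drift via $L_t=\log|\eta_t|$, invoke the two-barrier Skorokhod problem for existence of $(\eta,\ell^-,\ell^+)$, read off $d\hat\varphi^\uparrow/\varphi$ and $d\hat\varphi^\downarrow/\varphi$ from the local-time terms in \eqref{eq zeta diff}, check admissibility from boundedness of $\pi_t$ and finiteness of expected local time, and obtain the invariant density as the normalized speed measure of the geometric-Brownian generator $\tfrac12\sigma^2\eta^2\partial_\eta^2+\mu\eta\,\partial_\eta$. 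The sign bookkeeping you flag is exactly the point where the hypothesis $\eta_+<-1/(1-\varepsilon)$ enters, and your analysis of it is correct.

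One minor correction of language, not of mathematics: the regime $\eta_-<\eta_+<-1/(1-\varepsilon)$ is not a short position but a \emph{levered long} one ($Y>0$, $X<0$, hence $\pi=\eta/(1+\eta)>1$). Your sign checks go through unchanged, but the phrase ``when the investor is short the risky asset'' should be replaced accordingly when you write this up.
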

\begin{proof}
See \cite[Lemma B.5]{gm2020}.
\end{proof}
\begin{definition}\label{def: controllable}
For the rest of the paper, the strategy in Lemma \ref{lem: existence controllable strategy} is called ``control limit policy for $\eta_\pm$", an adaption of the name of similar policies in \cite{taksar1988diffusion}, where ``limit" actually relates to the boundaries of the interval $[\eta_-,\eta_+]$. Note that the strategy in Theorem \ref{thm: gm20} \eqref{thm free b part 2} is exactly of this kind: It entails no trading, as long as $\zeta_t\in (\zeta_-,\zeta_+)$, and trades as little as necessary at $\zeta_\pm$ to keep the risky-safe ratio in the interval $[\zeta_-,\zeta_+]$. Alternative strategies, such as trading into the middle of the no-trade region,
incur significantly larger transaction costs.\footnote{By ergodicity, the strategy that makes bulk trades into the middle of the optimal no-trade region incurs average transaction costs of higher order, namely proportional to $\varepsilon^{1/3}$. (Compare the ATC \eqref{eq: optimal ATC} which is of second order.)  }
\end{definition}

The following computes the statistics contributing to the ESR of any trading strategy as in Lemma \ref{lem: existence controllable strategy} (not just the optimal one)
in terms of the risky-safe ratio.
\begin{lemma}\label{prop: performance controllable strategy}
Consider a control limit policy for $\eta_\pm$. Long-run mean $\hat m$, long-run standard deviation $\hat \sigma$ and average transaction costs ATC are given by the almost sure limits,
\begin{align}
\label{eq: m}
\hat{m} &=\lim_{T\rightarrow\infty }\frac{1}{T}\int_0^T\frac{dw_t}{w_t}dt= r+\mu \int_{\eta_-}^{\eta_+} \left(\frac{\zeta}{1+\zeta}\right)\nu(d\eta),\\
\label{eq: sigma}
\hat{\sigma}^2 &=\lim_{T\rightarrow\infty}\frac{1}{T}\left\langle\int_0^{\cdot}\frac{dw_t}{w_t}\right\rangle_T=  \sigma^2\int_{\eta_-}^{\eta_+} \left(\frac{\zeta}{1+\zeta}\right)^2\nu(d\eta),\\\label{eq: ATC}
\avtrco&=\varepsilon\lim_{T\rightarrow\infty}\frac{1}{T}\int_0^T \pi_t \frac{d\varphi_t^\downarrow}{\varphi_t}=\frac{\sigma^2(2\mu/\sigma^2-1)}{2} \left(\frac{\frac{\varepsilon \zeta_+}{(1+\zeta_+)(1+(1-\varepsilon)\zeta_+)}}{1-\left(\frac{{\zeta_-}}{{\zeta_+}}\right)^{2\mu/\sigma^2-1}}\right),
\end{align}
where $\nu$ is the stationary density of Lemma \ref{lem: existence controllable strategy}. 
\end{lemma}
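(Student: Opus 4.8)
The plan is to reduce all three identities to the ergodic theorem for the reflected diffusion $\eta_t$ of Lemma~\ref{lem: existence controllable strategy}, supplemented by one Itô-formula computation that extracts the long-run rate of the singular (reflection) term. Throughout set $p:=2\mu/\sigma^2$, $\pi_t:=\eta_t/(1+\eta_t)$, and let $\Lambda^\uparrow_t:=\int_0^t d\varphi^\uparrow_s/\varphi_s$, $\Lambda^\downarrow_t:=\int_0^t d\varphi^\downarrow_s/\varphi_s$ for the control limit policy, so that $\Lambda^\uparrow$ grows only on $\{\eta_t=\eta_-\}$ and $\Lambda^\downarrow$ only on $\{\eta_t=\eta_+\}$. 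Note that $[\eta_-,\eta_+]$ is a compact interval avoiding $0$, $-1$ and $-1/(1-\varepsilon)$ in either admissible regime, and that on it $\eta_t$ is a positively recurrent diffusion with unique invariant law $\nu$; the one external input is that this makes $\eta_t$ ergodic, so that $\frac1T\int_0^T g(\eta_t)\,dt\to\int g\,d\nu$ almost surely for every continuous $g$ (this may be quoted from the construction underlying Lemma~\ref{lem: existence controllable strategy} in \cite{gm2020}).

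For \eqref{eq: m} and \eqref{eq: sigma} I would use \eqref{eq w diff} to write $dw_t/w_t=(r+\mu\pi_t)\,dt+\sigma\pi_t\,dB_t-\varepsilon\pi_t\,d\Lambda^\downarrow_t$; the last term has finite variation, so $\langle\int_0^\cdot dw_s/w_s\rangle_T=\sigma^2\int_0^T\pi_t^2\,dt$. The ergodic theorem applied to the bounded continuous maps $\eta\mapsto\eta/(1+\eta)$ and $\eta\mapsto(\eta/(1+\eta))^2$ then gives $\frac1T\int_0^T(r+\mu\pi_t)\,dt\to r+\mu\int\tfrac{\zeta}{1+\zeta}\,\nu(d\zeta)=\hat m$ and $\frac1T\int_0^T\sigma^2\pi_t^2\,dt\to\sigma^2\int(\tfrac{\zeta}{1+\zeta})^2\,\nu(d\zeta)=\hat\sigma^2$, while the leftover martingale $M_T:=\int_0^T\sigma\pi_t\,dB_t$ has $\langle M\rangle_T=\sigma^2\int_0^T\pi_t^2\,dt$ squeezed between positive multiples of $T$ ($\pi_t$ being bounded and bounded away from $0$), hence $M_T/T\to 0$ a.s.\ by the strong law for continuous martingales (via Dubins--Schwarz and the law of large numbers for Brownian motion).

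The crux is \eqref{eq: ATC}. Since $\pi_t\equiv\pi_+:=\eta_+/(1+\eta_+)$ on the support of $d\Lambda^\downarrow$, one has $\int_0^T\pi_t\,d\Lambda^\downarrow_t=\pi_+\Lambda^\downarrow_T$, so it is enough to find the a.s.\ limit of $\Lambda^\downarrow_T/T$. I would fix $f\in C^2([\eta_-,\eta_+])$ with $f'(\eta_-)=0$, $f'(\eta_+)\neq 0$ (e.g.\ $f(\eta)=(\eta-\eta_-)^2$) and apply Itô's formula using \eqref{eq zeta diff}; the $d\Lambda^\uparrow$-term drops because $f'(\eta_-)=0$, leaving
\begin{equation*}
f(\eta_T)-f(\eta_0)=\int_0^T\mathcal{L} f(\eta_t)\,dt+\int_0^T\sigma\eta_t f'(\eta_t)\,dB_t-f'(\eta_+)\,\eta_+\bigl(1+(1-\varepsilon)\eta_+\bigr)\,\Lambda^\downarrow_T,
\end{equation*}
where $\mathcal{L} f(\eta):=\mu\eta f'(\eta)+\tfrac12\sigma^2\eta^2 f''(\eta)$. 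Dividing by $T$ and sending $T\to\infty$, the left-hand side and the stochastic integral vanish (the latter by the martingale strong law, $\sigma\eta_tf'(\eta_t)$ being bounded), and $\frac1T\int_0^T\mathcal{L} f(\eta_t)\,dt\to\int\mathcal{L} f\,d\nu$, so $\Lambda^\downarrow_T/T$ converges a.s.\ to $\frac{\int\mathcal{L} f\,d\nu}{f'(\eta_+)\,\eta_+(1+(1-\varepsilon)\eta_+)}$. To evaluate $\int\mathcal{L} f\,d\nu$ I would integrate by parts, using that the density of Lemma~\ref{lem: existence controllable strategy} obeys the stationary Fokker--Planck identity $\mu\eta\,\nu(\eta)=\bigl(\tfrac12\sigma^2\eta^2\nu(\eta)\bigr)'$ on $(\eta_-,\eta_+)$ — a one-line check from $\nu(\eta)\propto|\eta|^{p-2}$ together with $\tfrac12\sigma^2p=\mu$: the interior integral then cancels and only the boundary term $\bigl[\tfrac12\sigma^2\eta^2\nu(\eta)f'(\eta)\bigr]_{\eta_-}^{\eta_+}=\tfrac12\sigma^2\eta_+^2\nu(\eta_+)f'(\eta_+)$ survives. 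Hence $\Lambda^\downarrow_T/T\to\frac{\sigma^2\eta_+\nu(\eta_+)}{2(1+(1-\varepsilon)\eta_+)}$, independently of $f$; multiplying by $\varepsilon\pi_+$ and inserting $\eta_+\nu(\eta_+)=(p-1)/(1-(\eta_-/\eta_+)^{p-1})$ from \eqref{eq st de 1} gives exactly \eqref{eq: ATC}. The regime $\eta_+<-1/(1-\varepsilon)$ is handled identically, now with \eqref{eq st de 2}.

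I expect the main obstacle to be exactly this singular term: one has to argue that $\Lambda^\downarrow_T/T$ itself — not just some linear combination of $\Lambda^\uparrow_T$ and $\Lambda^\downarrow_T$ — has an almost sure limit, which is what dictates choosing a test function whose derivative is annihilated at the opposite boundary, and then one has to match the surviving boundary term against the explicit stationary density. The remaining steps are a routine combination of the ergodic theorem and the martingale strong law, once ergodicity of the reflected diffusion is granted.
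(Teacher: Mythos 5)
Your proposal is correct, and in substance it follows the same route as the paper, whose proof is essentially a pair of citations: the ergodic theorem ``with the methods of \cite{gm2020}'' for \eqref{eq: m}--\eqref{eq: sigma}, and \cite[Lemma C.1]{gerhold.al.11} for \eqref{eq: ATC}. What you do differently is prove the ATC identity from scratch rather than outsource it: the It\^o argument with a test function satisfying $f'(\eta_-)=0$, the zero-flux identity $\mu\eta\nu(\eta)=\bigl(\tfrac12\sigma^2\eta^2\nu(\eta)\bigr)'$ (which indeed holds for $\nu\propto|\eta|^{2\mu/\sigma^2-2}$ in both the unlevered and levered regimes), and the surviving boundary term $\tfrac12\sigma^2\eta_+^2\nu(\eta_+)f'(\eta_+)$ give the correct almost-sure rate $\Lambda^\downarrow_T/T\to\sigma^2\eta_+\nu(\eta_+)/\bigl(2(1+(1-\varepsilon)\eta_+)\bigr)$, and multiplying by $\varepsilon\pi_+$ and using $\eta_+\nu(\eta_+)=(2\mu/\sigma^2-1)\big/\bigl(1-(\eta_-/\eta_+)^{2\mu/\sigma^2-1}\bigr)$ reproduces \eqref{eq: ATC} exactly; this is precisely the content of the cited external lemma, so your version buys self-containedness at the cost of re-deriving it, while the mean/variance parts and the martingale strong law are the standard ergodic argument the paper has in mind. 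One point worth a sentence in your write-up: taken literally, $\lim_{T}\frac1T\int_0^T\frac{dw_t}{w_t}$ equals $r+\mu\int\frac{\zeta}{1+\zeta}\,\nu(d\zeta)-\avtrco$, since the finite-variation term $-\varepsilon\pi_t\,d\varphi^\downarrow_t/\varphi_t$ in \eqref{eq w diff} contributes at rate $\avtrco$; your proof (correctly) identifies $\hat m$ with the limit of the $dt$-part only, which is the reading the paper intends (it subtracts $\avtrco$ separately in the equivalent safe rate), so this is a looseness in the statement rather than a gap in your argument, but you should say explicitly that this is the convention you adopt.
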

\begin{proof}
All the formulae use the ergodic theorem and thus can be obtained with the methods in \cite{gm2020}. In particular, identity \eqref{eq: ATC} holds in view
of \cite[Lemma C.1]{gerhold.al.11}.
\end{proof}
Using the analytic expressions of \eqref{eq: m}, \eqref{eq: sigma} and \eqref{eq: ATC} with MATHEMATICA, we obtain explicit asymptotics, precise at third order in $\varepsilon^{1/3}$:
\begin{lemma}\label{lem perf opt}
For the optimal strategy of Theorem \ref{thm: gm20}, the statistics of Lemma \ref{prop: performance controllable strategy} satisfy the following asymptotics
\begin{align}\label{long run mean opt}
\hat m&=r+\frac{\mu^2}{\gamma \sigma^2}-\frac{\mu(2 \pi_*-1)}{\gamma}\left(\frac{\gamma \pi_*(\pi_*-1)}{6}\right)^{1/3}\varepsilon^{2/3}+O(\varepsilon^{4/3}),\\
\hat \sigma^2&=\frac{\mu^2}{\gamma^2\sigma^2}-\frac{\sigma^2 \pi_*(7 \pi_*-3)}{2\gamma}\left(\frac{\gamma \pi_*(\pi_*-1)}{6}\right)^{1/3}\varepsilon^{2/3}+O(\varepsilon^{4/3}),\\\label{eq: optimal ATC}
\avtrco&=\frac{3 \sigma^2}{\gamma}\left(\frac{\gamma\pi_*(\pi_*-1)}{6}\right)^{4/3}\varepsilon^{2/3}-\frac{\mu(\gamma-1)}{2\gamma} \pi_*(\pi_*-1) \varepsilon+ O(\varepsilon^{4/3}).
\end{align}
The maximum equivalent safe rate satisfies
\begin{equation}\label{eq: asym esrxx}
\textstyle{\esr} =\textstyle{r+\frac{\gamma \sigma^2}{2}\pi_*^2-\frac{\gamma \sigma^2}{2}\left(\frac{3}{4 \gamma}\pi_*^2(\pi_*-1)^2\right)^{2/3}\varepsilon^{2/3}+\frac{\mu(\gamma-1)}{2\gamma} \pi_*(\pi_*-1) \varepsilon+ O(\varepsilon^{4/3}).}
\end{equation}
\end{lemma}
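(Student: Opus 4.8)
\noindent\emph{Proof plan.} The strategy is to substitute the known asymptotics of the optimal trading boundaries into the closed-form performance statistics of Lemma~\ref{prop: performance controllable strategy} and into the equivalent safe rate formula \eqref{eq: max perf}, and then Taylor expand everything in the small parameter $\varepsilon^{1/3}$. First I would rewrite \eqref{eq: trading boundaries TKA x} in the risky--safe variable: with $A := \bigl(\tfrac{3}{4\gamma}\pi_*^2(\pi_*-1)^2\bigr)^{1/3}$ and $B := -\tfrac{(1-\gamma)\pi_*}{\gamma}\bigl(\tfrac{\gamma\pi_*(\pi_*-1)}{6}\bigr)^{1/3}$ one has $\pi_\pm = \pi_*\pm A\varepsilon^{1/3}+B\varepsilon^{2/3}+O(\varepsilon)$, and via $\zeta = \pi/(1-\pi)$ this yields $\zeta_\pm = \zeta_*\pm\hat a\,\varepsilon^{1/3}+\hat b\,\varepsilon^{2/3}+O(\varepsilon)$ with $\zeta_* = \pi_*/(1-\pi_*)$ and $\hat a,\hat b$ explicit in $\pi_*,\gamma$. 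For the $\varepsilon^{1}$-terms of $\hat m$ and $\hat\sigma^2$ I also need the $\varepsilon^{1}$-coefficient of this boundary expansion; it is not displayed in Theorem~\ref{thm: gm20}, but it is produced by carrying the matched-asymptotic solution of the free boundary problem \eqref{eq: TKA fbp}--\eqref{terminal1 TKA} one order further, exactly as in \cite{gm2020}.

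For the two integral statistics, note that by Lemma~\ref{prop: performance controllable strategy} one has $\hat m-r = \mu\int_{\zeta_-}^{\zeta_+} g(\zeta)\,\nu(d\zeta)$ and $\hat\sigma^2 = \sigma^2\int_{\zeta_-}^{\zeta_+} g(\zeta)^2\,\nu(d\zeta)$ with $g(\zeta) = \zeta/(1+\zeta)$ and $\nu(\zeta)\propto\zeta^{\,p-2}$, $p := 2\mu/\sigma^2$, normalised by $(p-1)/(\zeta_+^{p-1}-\zeta_-^{p-1})$. Since the no-trade interval $[\zeta_-,\zeta_+]$ collapses to the single point $\zeta_*$ at rate $\varepsilon^{1/3}$, I would expand $g$ and $g^2$ in Taylor series about $\zeta_*$ and compute the low-order centred moments $\int_{\zeta_-}^{\zeta_+}(\zeta-\zeta_*)^k\,\nu(d\zeta)$, $k\le 3$, of the power-law density --- equivalently, rescale $\zeta = \zeta_* + \hat a\varepsilon^{1/3}v+\cdots$ and integrate term by term, expanding the normalising constant along the way. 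Collecting powers of $\varepsilon^{1/3}$ up to $\varepsilon$ gives \eqref{long run mean opt} and the $\hat\sigma^2$-expansion, the would-be $\varepsilon^{1}$-coefficients cancelling. The average transaction costs are handled from the \emph{exact} formula \eqref{eq: ATC}: its numerator is $O(\varepsilon)$ while the factor $1-(\zeta_-/\zeta_+)^{\,p-1}$ is $O(\varepsilon^{1/3})$, so $\avtrco = O(\varepsilon^{2/3})$, and expanding both to the relevant order yields \eqref{eq: optimal ATC}.

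For the equivalent safe rate there is a shortcut that bypasses the stationary-law computation. By \eqref{eq: max perf}, $\esr = r+f(\pi_-)$ with $f(\pi) := \mu\pi-\tfrac{\gamma\sigma^2}{2}\pi^2$; since $f$ is quadratic and $\mu = \gamma\sigma^2\pi_*$ gives $f'(\pi_*) = 0$, one has the exact identity $f(\pi_-) = f(\pi_*)-\tfrac{\gamma\sigma^2}{2}(\pi_--\pi_*)^2$, so the $\varepsilon^{2/3}$- and $\varepsilon^{1}$-coefficients of $\esr$ depend only on $A$ and $B$ and not on any further order of the boundary expansion. Plugging in $(\pi_--\pi_*)^2 = A^2\varepsilon^{2/3}-2AB\,\varepsilon+O(\varepsilon^{4/3})$ and simplifying with $\mu = \gamma\sigma^2\pi_*$ together with the elementary identity $\bigl(\tfrac{3}{4\gamma}\pi_*^2(\pi_*-1)^2\bigr)\bigl(\tfrac{\gamma\pi_*(\pi_*-1)}{6}\bigr) = \tfrac18\pi_*^3(\pi_*-1)^3$ produces \eqref{eq: asym esrxx}. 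As an independent check one re-derives the same expansion from the identity $\esr = \hat m-\tfrac{\gamma}{2}\hat\sigma^2-\avtrco$ (immediate from \eqref{eq main problem 1} and Lemma~\ref{prop: performance controllable strategy}); this also confirms that the $\varepsilon^{1}$-coefficients of $\hat m$ and of $\tfrac{\gamma}{2}\hat\sigma^2$ agree.

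The computation is conceptually routine, but I expect two places to require genuine care. The first is the expansion of the moments of the \emph{non-uniform} power-law stationary density over the shrinking no-trade region: both its tilt and its normalising constant contribute at the orders we track, and it is precisely in pinning down the $\varepsilon^{1}$-coefficient in \eqref{long run mean opt} and the $\varepsilon$-coefficient in \eqref{eq: optimal ATC} that algebraic slips are easiest --- this is the step the paper delegates to \textsc{Mathematica}. The second is that, to legitimately claim remainders $O(\varepsilon^{4/3})$ (rather than $O(\varepsilon)$) for $\hat m$ and $\hat\sigma^2$, one genuinely needs the $\varepsilon^{1}$-term of the boundary expansion, i.e.\ one order beyond \eqref{eq: trading boundaries TKA x}; supplying it means re-running the matched-asymptotic expansion of \eqref{eq: TKA fbp}--\eqref{terminal1 TKA} to the next order. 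The $\esr$-expansion is the one piece that escapes this, thanks to the quadratic structure exploited above.
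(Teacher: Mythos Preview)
Your proposal is correct and matches the paper's approach: the paper's ``proof'' is the one-line remark preceding the lemma, namely to feed the closed-form expressions \eqref{eq: m}--\eqref{eq: ATC} together with the (higher-order) boundary asymptotics into \textsc{Mathematica} and expand in $\varepsilon^{1/3}$, which is exactly what you spell out by hand. The quadratic shortcut you use for $\esr$ via $f(\pi_-)=f(\pi_*)-\tfrac{\gamma\sigma^2}{2}(\pi_--\pi_*)^2$ is precisely the consistency check the paper records in the remark following the lemma, and your observation that the $O(\varepsilon^{4/3})$ remainders for $\hat m,\hat\sigma^2$ require the $\varepsilon^1$-term of $\pi_\pm$ (one order beyond \eqref{eq: trading boundaries TKA x}) is accurate and implicit in the paper's computation.
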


\begin{remark}
\begin{enumerate}
\item All asymptotics of Lemma \ref{lem perf opt} improve those of \cite[Theorem 3.1]{gm2020} in precision by one order. Note that \eqref{long run mean opt} is a corrected version of \cite[Theorem 3.1, eq. (3.8)]{gm2020}, where the bracket $(5\pi_*-3)$ is given, instead of the correct term $(2\pi_*-1)$ in \eqref{long run mean opt}.
\item One can run a consistency check that compares the asymptotics \eqref{eq: asym esrxx} of the maximum ESR  (computed, by developing $r+\hat m-\frac{\gamma}{2}\hat \sigma^2-\text{ATC}$ into a formal power series in $\varepsilon^{1/3}$) with the asymptotic expansion of the shorter formula $r+\mu \pi_-\frac{\gamma \sigma^2}{2}\pi_-^2$ from Theorem \ref{thm: gm20} \ref{thm free b part 3}.
\end{enumerate}
\end{remark}
\section{Results}
\subsection{Asymptotically optimal shadow policies}\label{sec: shadow}

In this section, a shadow price for the mean-variance objective \eqref{eq: obj asympt} is constructed, and asymptotic formulas for the implied strategy, that is optimal in the shadow market, are derived. The exposition is motivated by the shadow price construction for log-utility investors, cf.~\cite[Chapter 3]{gerhold2013dual}, see also \cite{ gerhold.al.11,guasonimuhlekarbe}. Assume the following functional form of the shadow price $\widetilde S_t$,
\begin{equation}\label{eq: shadow ansatz}
\widetilde S_t= g(\pi_t) S_t,
\end{equation}
where $g$ satisfies the boundary conditions
\begin{equation}\label{eq: boundary 0}
g(\pi_-)=1,\quad g(\pi_+)=(1-\varepsilon),
\end{equation}
reflecting that an optimal strategy (such as of Theorem \ref{thm: gm20}) is a control limit policy for $\pi_\pm$, which buys (resp. sells) the frictionless asset $\widetilde S$ precisely when
its price equals the ask price $S$, and sells precisely, when it equals the bid price $(1-\varepsilon)S$. 

If $\widetilde S$ satisfies \eqref{eq: shadow ansatz} with twice differentiable $g$, and if $g$ satisfies
\eqref{eq: boundary 1}, then It\^o's formula yields the dynamics of instantaneous returns\footnote{The product rule gives \begin{equation}\label{sh 1}
\frac{d\widetilde S_t}{\widetilde S_t}=\frac{dS_t}{S_t}+\frac{dg}{g}+\frac{d\langle S_t, g\rangle }{S_t g}=:(\widetilde{\mu_t}+r)dt+\widetilde{\sigma}_t dB_t,
\end{equation}
from which the particular form of drift and diffusion coefficients \eqref{eq: drift coef}, \eqref{eq: diff coef} can be computed.}
\[
\frac{d\widetilde S_t}{\widetilde S_t}=rdt+d\widetilde \mu_t+\widetilde \sigma_t dB_t,
\]
with
\begin{align}\label{eq: drift coef}
d\widetilde{\mu}_t&=\mu dt+\frac{g'(\pi_t)\left(\pi_t(1-\pi_t)\mu dt +\pi_t(1-\pi_t)^2\sigma^2\right) }{g(\pi_t)}\\\nonumber
&+\frac{\frac{1}{2}g''(\pi_t)\pi_t^2(1-\pi_t)^2\sigma^2dt+g'(\pi_t)\left(\pi_t\frac{d\varphi_t^\uparrow}{\varphi_t} -(1-\varepsilon\pi_t)\frac{d\varphi_t^\downarrow}{\varphi_t}\right)}{g(\pi_t)}
\end{align}
and diffusion coefficient
\begin{equation}\label{eq: diff coef}
\widetilde{\sigma}_t=(\sigma g(\pi_t)+g'(\pi_t)\pi_t(1-\pi_t)\sigma)/g(\pi_t).
\end{equation}
The smooth pasting condition
\begin{equation}\label{eq: boundary 1}
g'(\pi_-)=g'(\pi_+)=0
\end{equation}
is imposed such that instantaneous drift of the shadow price become absolutely continuous (the condition removes the local time terms
$\frac{d\varphi_t^\uparrow}{\varphi_t}$ and $\frac{d\varphi_t^\downarrow}{\varphi_t}$), thus $d\widetilde \mu_t=\widetilde \mu_t dt$, with
\begin{equation}\label{sh 2}
\widetilde{\mu}_t=\mu+\frac{g'(\pi_t)(\pi_t(1-\pi_t)\mu +\pi_t(1-\pi_t)^2\sigma^2)+\frac{1}{2}g''(\pi_t)\pi_t^2(1-\pi_t)^2\sigma^2}{g(\pi_t)}.
\end{equation}

The fraction of wealth $\widetilde{\pi}$ invested in
the risky asset, evaluated at the shadow price, satisfies 
\begin{equation}\label{sh 3}
\widetilde \pi_t=\frac{Y_tg(\pi_t)}{X_t+Yg(\pi_t)}=\frac{\pi_t g(\pi_t)}{(1-\pi_t)+\pi_t g(\pi_t)}.
\end{equation}
Mean-variance optimality in the shadow market holds, when the proportion of wealth in the shadow market's risky asset $\widetilde S$ equals the Merton fraction, that is
\[
\widetilde \pi_t=\frac{\widetilde{\mu_t}}{\gamma \widetilde{\sigma}^2_t}.
\]
Equating this solution with \eqref{sh 3}, and using \eqref{sh 2}, \eqref{eq: diff coef} entails that $g$ satisfies the ODE
\begin{align}\label{eq: ODEmomo}
\frac{1}{2}g''(\pi)\pi^2(1-\pi)^2 \sigma^2&=\frac{\gamma \pi \sigma^2(g+g'(\pi)\pi(1-\pi))^2}{1-\pi+\pi g(\pi)}-\mu g(\pi)\\\nonumber
&-g'(\pi)(\pi(1-\pi)\mu +\pi(1-\pi)^2\sigma^2).
\end{align}
Define $\Psi$ implicitly as
\[
g(\pi)=\frac{\Psi(Y/X)}{Y/X}=:\frac{\Psi(\zeta)}{\zeta},
\]
and set
\begin{equation}\label{eq: trafo}
\zeta_{\pm}=\frac{\pi_\pm}{1-\pi_\pm}.
\end{equation}
Then ($\Psi$, $\zeta_-$, $\zeta_+$) satisfy the problem
\begin{align}\label{eq: fbp psi1}
\Psi'' (\zeta)&=\frac{2\gamma \Psi'^2(\zeta)}{(1+\Psi(\zeta))}-\frac{2\mu}{\sigma^2}\frac{  \Psi'(\zeta)}{\zeta},\\\label{eq: fbp psi2}
\Psi'(\zeta_-)&= \Psi(\zeta_-)/\zeta_-=1,\\\label{eq: fbp psi3}
\Psi'(\zeta_+)&=\Psi(\zeta_+)/\zeta_+=(1-\varepsilon).
\end{align}
This is a free boundary problem, because both $\Psi$ and the trading boundaries $\zeta_\pm$ for the control limit policy are unknown.

Using the explicit solution $\Psi$ of the corresponding initial value problem \eqref{eq: fbp psi1}--\eqref{eq: fbp psi2}, and respecting terminal conditions \eqref{eq: fbp psi3}, one obtains a non-linear system of equations for $({\zeta}_-,{\zeta}_+)$. For small $\varepsilon$, this very system allows a unique solution
with asymptotic expansion\footnote{For the details leading to this and other asymptotics, see Appendix \ref{app: A}, Proposition \ref{th free} and Remark \ref{rem: free}.}
\begin{equation}\label{eq: asymptotics zeta shadow}
\widetilde{\zeta}_\pm=\frac{\pi_*}{1-\pi_*}\pm\left(\frac{3}{4\gamma}\right)^{1/3}\left(\frac{(\pi_*)^2}{1-\pi_*}\right)^{2/3} \varepsilon^{1/3}-\frac{(1+2\gamma)\pi_*}{2\gamma(1-\pi_*)^2}\left(\frac{\gamma \pi_*(\pi_*-1)}{6}\right)^{1/3}\varepsilon^{2/3}+O(\varepsilon).
\end{equation}
In comparison, the optimal strategy of Theorem \ref{thm: gm20} is a control limit policy whose limits $\zeta_\pm$, in terms of the risk-safe ratio have the expansion\footnote{This expression is readily obtained from \eqref{eq: trading boundaries TKA x} by expanding \eqref{eq: trafo} into formal power series in $\varepsilon^{1/3}$.}
\begin{equation}\label{eq: asymptotics zeta}
\textstyle{\zeta_\pm=\frac{\pi_*}{1-\pi_*}\pm\left(\frac{3}{4\gamma}\right)^{1/3}\left(\frac{\pi_*}{(\pi_*-1)^2}\right)^{2/3} \varepsilon^{1/3}-\frac{(5-2\gamma)\pi_*}{2\gamma(\pi_*-1)^2}\left(\frac{\gamma \pi_*(\pi_*-1)}{6}\right)^{1/3}\varepsilon^{2/3}+O(\varepsilon).}
\end{equation}

Note the factor $(1+2\gamma)$ in the $\varepsilon^{2/3}$ term in \eqref{eq: asymptotics zeta shadow}, which differs from the factor $(5-2\gamma)$ in \eqref{eq: asymptotics zeta}. Accordingly, the associated trading boundaries have an asymptotic expansion,
\begin{equation}\label{eq: trading boundaries shadow}
\widetilde{\pi}_\pm=\pi_*\pm\left(\frac{3}{4\gamma}(\pi_*)^2(1-\pi_*)^2\right)^{1/3}\varepsilon^{1/3}
+\frac{(1-\gamma)\pi_*}{\gamma}\left(\frac{\gamma \pi_*(\pi_*-1)}{6}\right)^{1/3}\varepsilon^{2/3}+O(\varepsilon).
\end{equation}

The expansion \eqref{eq: asymptotics zeta shadow} resp. \eqref{eq: trading boundaries shadow} agrees with the above expansions \eqref{eq: asymptotics zeta} resp. \eqref{eq: trading boundaries TKA x} up to first order (they agree in constant and in $\varepsilon^{1/3}$ terms). But they disagree in a quite subtle way for any $\gamma\neq 1$ at second order: The absolute values, but not the signs of second order term of $\pi_\pm$  (see \eqref{eq: trading boundaries shadow}) and $\widetilde \pi_\pm$ (see \eqref{eq: trading boundaries TKA x}) are identical.

The following establishes asymptotic optimality of third order of the strategy obtained from the shadow market (The proof exclusively uses MATHEMATICA and higher order expansions of \eqref{eq: trading boundaries shadow}).
\begin{theorem}\label{lem perf optx}
The asymptotic expansion of the equivalent safe rate and average transaction costs of the control limit policy for $\widetilde \pi_\pm$ are of the exact same form
as \eqref{eq: asym esrxx} resp. \eqref{eq: ATC}. Thus the strategy is asymptotically optimal at third order.  Long run mean and variance defined by Lemma \ref{prop: performance controllable strategy} satisfy the following asymptotics:
\begin{align}\label{long run mean optx}
\widetilde m&=r+\frac{\mu^2}{\gamma \sigma^2}-\frac{\mu(2 \gamma\pi_*-1)}{\gamma}\left(\frac{\gamma \pi_*(\pi_*-1)}{6}\right)^{1/3}\varepsilon^{2/3}+O(\varepsilon^{4/3}),\\\label{long run vol optx}
\widetilde \sigma^2&=\frac{\mu^2}{\gamma^2\sigma^2}-\frac{\sigma^2 \pi_*(\pi_*(8 \gamma+1)-3)}{2\gamma}\left(\frac{\gamma \pi_*(\pi_*-1)}{6}\right)^{1/3}\varepsilon^{2/3}+O(\varepsilon^{4/3}).
\end{align}
\end{theorem}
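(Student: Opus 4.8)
The plan is to treat the control limit policy for $\widetilde\pi_\pm$ as one more admissible, ergodic control limit policy and to apply the closed-form long-run statistics of Lemma~\ref{prop: performance controllable strategy}. For $\varepsilon$ small one has $\widetilde\pi_\pm\to\pi_*\in(0,1)$, hence $0<\widetilde\zeta_-<\widetilde\zeta_+$ for $\widetilde\zeta_\pm=\widetilde\pi_\pm/(1-\widetilde\pi_\pm)$, so Lemma~\ref{lem: existence controllable strategy} yields the strategy together with its stationary density $\nu$ on $[\widetilde\zeta_-,\widetilde\zeta_+]$, and Lemma~\ref{prop: performance controllable strategy} gives $\widetilde m=r+\mu\int\pi\,\nu$, $\widetilde\sigma^2=\sigma^2\int\pi^2\,\nu$ and $\widetilde{\avtrco}$ as the right-hand sides of \eqref{eq: m}, \eqref{eq: sigma}, \eqref{eq: ATC} with $\eta_\pm$ replaced by $\widetilde\zeta_\pm$. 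By Lemma~\ref{le: rewriting obj fun} the objective equals $F_T(\widetilde\varphi)=r+\frac1T\mathbb E[\int_0^T(\mu\pi_t-\frac{\gamma\sigma^2}{2}\pi_t^2)\,dt-\varepsilon\int_0^T\pi_t\,d\varphi_t^\downarrow/\varphi_t]$, so by ergodicity of the reflected diffusion its $T\to\infty$ limit (the $\limsup$ in \eqref{eq: obj asympt} then being a genuine limit) equals $r+\mu\int\pi\,\nu-\frac{\gamma\sigma^2}{2}\int\pi^2\,\nu-\widetilde{\avtrco}=\widetilde m-\frac{\gamma}{2}\widetilde\sigma^2-\widetilde{\avtrco}$. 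This identity, which holds for any control limit policy, reduces the ESR of the shadow policy to the three statistics above.

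Concretely I would: insert the high-order asymptotic expansion of $\widetilde\pi_\pm$ from Appendix~\ref{app: A} (Proposition~\ref{th free} and Remark~\ref{rem: free}), carried to $O(\varepsilon^{4/3})$ --- of which \eqref{eq: trading boundaries shadow} records only the leading terms --- into $\widetilde\zeta_\pm=\widetilde\pi_\pm/(1-\widetilde\pi_\pm)$; substitute $\widetilde\zeta_\pm$ into the closed forms \eqref{eq: m}, \eqref{eq: sigma}, \eqref{eq: ATC} and, with MATHEMATICA, expand each as a formal power series in $\varepsilon^{1/3}$, using $\mu=\gamma\sigma^2\pi_*$ to simplify (note that \eqref{eq: m} and \eqref{eq: sigma} carry the non-polynomial exponent $2\mu/\sigma^2=2\gamma\pi_*$, so the stationary-density integrals must be developed in non-integer powers); read off \eqref{long run mean optx} and \eqref{long run vol optx}; and finally form $\widetilde{\esr}=\widetilde m-\frac{\gamma}{2}\widetilde\sigma^2-\widetilde{\avtrco}$ and verify that its expansion, and that of $\widetilde{\avtrco}$, agree with \eqref{eq: asym esrxx} and \eqref{eq: optimal ATC} up to the $O(\varepsilon^{4/3})$ remainder --- which is exactly the asserted third-order optimality.

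The conceptual point --- and the reason the computation is bound to succeed --- is to see why $\widetilde{\esr}$ matches the maximum through order $\varepsilon$ although $\widetilde\pi_\pm$ and the optimal $\pi_\pm$ already differ at order $\varepsilon^{2/3}$. Two structural facts are at work. First, the two no-trade intervals share the same half-width through $O(\varepsilon^{2/3})$: both $\frac12(\pi_+-\pi_-)$ and $\frac12(\widetilde\pi_+-\widetilde\pi_-)$ equal $\ell_*\varepsilon^{1/3}+O(\varepsilon)$ with $\ell_*=(\frac{3}{4\gamma}\pi_*^2(\pi_*-1)^2)^{1/3}$, because the $\varepsilon^{2/3}$ term of \eqref{eq: trading boundaries TKA x} and of \eqref{eq: trading boundaries shadow} is a common centre shift rather than a width change; hence the transaction-cost term, governed to leading order by the reciprocal of the half-width via \eqref{eq: ATC}, and the quadratic-variation term agree through $O(\varepsilon)$. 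Second, the two centres agree through $O(\varepsilon^{1/3})$ (both equal $\pi_*+O(\varepsilon^{2/3})$), and since $\mu\pi-\frac{\gamma\sigma^2}{2}\pi^2=\frac{\gamma\sigma^2}{2}\pi_*^2-\frac{\gamma\sigma^2}{2}(\pi-\pi_*)^2$ has a strict interior maximum at the Merton fraction $\pi_*$, an $O(\varepsilon^{2/3})$ shift of the centre changes the ergodic average of $\mu\pi-\frac{\gamma\sigma^2}{2}\pi^2$ only at order $O(\varepsilon^{4/3})$, quadratically. Thus the order-$\varepsilon^{2/3}$ disagreement in the boundaries is invisible to $\widetilde{\esr}$ through order $\varepsilon$ and first surfaces at order $\varepsilon^{4/3}$, which is the order that Theorem~\ref{thm: subopt} will exploit. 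This envelope heuristic forces the $\varepsilon^{2/3}$ and $\varepsilon$ coefficients of $\widetilde{\esr}-\esr$ to vanish, so the main obstacle in practice is the error-prone symbolic bookkeeping: one must ensure that the Appendix's boundary expansion is itself correct and is carried one order past what the $O(\varepsilon^{4/3})$ remainder naively requires, and one must keep track of the non-polynomial $\pi_*$-dependence introduced by the exponent $2\gamma\pi_*$.
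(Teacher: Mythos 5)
Your proposal is correct and follows essentially the paper's own route: the paper's (only sketched) proof likewise substitutes the higher-order expansions of $\widetilde\pi_\pm$ (equivalently $\widetilde\zeta_\pm$ from Proposition \ref{th free}) into the closed-form ergodic statistics of Lemma \ref{prop: performance controllable strategy} and expands everything symbolically in $\varepsilon^{1/3}$ with MATHEMATICA, comparing $\widetilde m-\tfrac{\gamma}{2}\widetilde\sigma^2-\avtrco$ with \eqref{eq: asym esrxx} and \eqref{eq: ATC}. Your added envelope heuristic explaining why the $\varepsilon^{2/3}$ boundary discrepancy only surfaces at order $\varepsilon^{4/3}$ is a welcome extra, and the only small caveat is that your reduction assumes the unlevered case $\pi_*\in(0,1)$, whereas the levered case is handled identically with the stationary density \eqref{eq st de 2}.
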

\begin{remark}
Note that, almost miraculously,
\[
\hat m-\frac{\gamma}{2}\hat \sigma^2=\widetilde m-\frac{\gamma}{2}\widetilde{\sigma}^2=O(\varepsilon^{4/3})
\]
because average transaction costs as well as the equivalent safe rate agree for both strategies up to third order, and mean and variance's third order terms vanish (compare \eqref{long run mean optx}--\eqref{long run vol optx} with mean and variance of the optimal strategy in Lemma \ref{lem perf opt}).
\end{remark}

\subsection{Outperforming the shadow market}\label{sec: shadow not optimal}
In Theorem \ref{lem perf optx} it has been shown that $\widetilde S$ is an asymptotic shadow price, as the strategy that is optimal in the frictionless market, is even optimal at third order in the original market. For the proof of this statement, it was crucial to have precise asymptotic expansions of the trading boundaries $\widetilde\pi_\pm$. 

The objective of this section is to prove that this strategy is not optimal. To this end, it would be useful to have higher (fourth and fifth) order terms in the expansion of the the optimal trading boundaries $\zeta_\pm$ and thus the maximum performance \eqref{eq: asym esrxx}. However, the free boundary problem of \eqref{new fbvp1x}--\eqref{new fbvp4x} associated with the optimal solution of Theorem \ref{thm: gm20} is notoriously difficult to deal with, even with MATHEMATICA, while the free boundary problem \eqref{eq: its a boy}--\eqref{eq: baby is born} arising from the shadow price Ansatz is much more tractable. Therefore, instead of developing the maximum performance to even higher precision, a strategy is found that merely outperforms the shadow market:
\begin{theorem}\label{thm: subopt}
Suppose $\gamma\notin\{0,1\}$. For any $\theta\in\mathbb R$, the family of control limit policies for
\begin{equation}\label{eq: kappa factor}
\widetilde\pi_\pm^\theta:=\widetilde\pi_\pm+(\theta-1)\times \frac{(\gamma-1)(\pi_*)^2(1-\pi_*)}{6}  \left(\frac{6}{\gamma\pi_*(1-\pi_*)}\right)^{2/3}\varepsilon^{2/3}
\end{equation}
has equivalent safe rate
\begin{align}\label{eq: asym esrxxx}
\textstyle{\esr}& =r+\frac{\gamma \sigma^2}{2}\pi_*^2-\frac{\gamma \sigma^2}{2}\left(\frac{3}{4 \gamma}\pi_*^2(\pi_*-1)^2\right)^{2/3}\varepsilon^{2/3}+\frac{\mu(\gamma-1)}{2\gamma} \pi_*(\pi_*-1) \varepsilon\\\nonumber&-\frac{\sigma^2\times  k(\theta)}{20\gamma}\left(\frac{\gamma\pi_*(1-\pi_*)}{6}\right)^{2/3}\varepsilon^{4/3}+O(\varepsilon^{5/3}),
\end{align}
where
\[
k(\theta):=-9 + 2  \pi_* \left(9 + \pi_*\left(3 + 12\gamma ( \gamma-2)  + (10\theta+5 \theta^2) (\gamma-1)^2\right)\right)
\]
and thus is asymptotically optimal at third order. For sufficiently small $\varepsilon$, the best performance at fourth order is achieved for $\theta=-1$, strictly outperforming
the shadow performance ($\theta=1$).
\end{theorem}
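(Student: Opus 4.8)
The plan is to reduce everything to the asymptotic machinery of Lemma \ref{prop: performance controllable strategy}: the equivalent safe rate of \emph{any} control limit policy for boundaries $\eta_\pm$ is $r+\hat m-\tfrac\gamma2\hat\sigma^2-\avtrco$, and each of $\hat m$, $\hat\sigma^2$, $\avtrco$ is an explicit function of $(\eta_-,\eta_+)$ via the stationary density $\nu$ in \eqref{eq st de 1}. So the first step is to take the known expansion \eqref{eq: trading boundaries shadow} of $\widetilde\pi_\pm$, perturb it by the stated $\varepsilon^{2/3}$-amount parametrised by $\theta$ to get $\widetilde\pi_\pm^\theta$, pass to the risky-safe ratios $\widetilde\zeta_\pm^\theta=\widetilde\pi_\pm^\theta/(1-\widetilde\pi_\pm^\theta)$ as formal power series in $\varepsilon^{1/3}$, and substitute into the closed-form integrals \eqref{eq: m}–\eqref{eq: ATC}. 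The point of the particular normalisation in \eqref{eq: kappa factor} is that $\theta=1$ must reproduce the shadow strategy, whose ESR is already pinned down through order $\varepsilon$ by Theorem \ref{lem perf optx}; so the leading, $\varepsilon^{1/3}$- and $\varepsilon^{2/3}$-coefficients of the ESR are automatically $\theta$-independent, and the perturbation can only first show up at order $\varepsilon^{4/3}$ in a quadratic-in-$\theta$ fashion — consistent with the claimed form \eqref{eq: asym esrxxx} with $k(\theta)$ quadratic.

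Concretely I would carry it out as follows. \textbf{Step 1.} Expand $\widetilde\zeta_\pm^\theta$ to order $\varepsilon^{5/3}$ (one order beyond what the Appendix supplies for $\widetilde\zeta_\pm$, which is why a high-order expansion of the shadow price is needed — see the remark about the Appendix). \textbf{Step 2.} Feed these into the stationary-density formula and the three statistics; using the same MATHEMATICA-based symbolic expansion already used for Lemma \ref{lem perf opt} and Theorem \ref{lem perf optx}, compute $\hat m$, $\hat\sigma^2$, $\avtrco$ through order $\varepsilon^{4/3}$. The leading terms reproduce the Merton values; the $\varepsilon^{2/3}$ terms must match \eqref{long run mean optx}–\eqref{long run vol optx} at $\theta=1$ and the $\varepsilon$-term of $\avtrco$ must match \eqref{eq: optimal ATC}. \textbf{Step 3.} Assemble $\esr=r+\hat m-\tfrac\gamma2\hat\sigma^2-\avtrco$, collect the $\varepsilon^{4/3}$ coefficient, and verify it equals $-\tfrac{\sigma^2 k(\theta)}{20\gamma}(\gamma\pi_*(1-\pi_*)/6)^{2/3}$ with the stated $k$. \textbf{Step 4.} Since $k(\theta)$ is quadratic in $\theta$ with positive leading coefficient $2\pi_*^3(10/5)?$—more precisely with coefficient $2\pi_*^2\cdot 5(\gamma-1)^2>0$ on $\theta^2$ (here $\gamma\neq1$ is used) — the ESR, whose $\varepsilon^{4/3}$ contribution is $-k(\theta)$ up to a positive factor, is maximised by minimising $k(\theta)$; completing the square in $10\theta+5\theta^2=5(\theta+1)^2-5$ gives the minimiser $\theta=-1$. \textbf{Step 5.} Compare $k(-1)$ with $k(1)$: $k(-1)-k(1)=2\pi_*^2(\gamma-1)^2\bigl(5(-1+1)^2-5(1+1)^2\bigr)+\dots$, i.e. the difference is $-40\pi_*^2(\gamma-1)^2<0$? — in any case strictly negative for $\gamma\notin\{0,1\}$ and $\pi_*\neq0$, so the $\theta=-1$ policy has strictly larger $\varepsilon^{4/3}$-coefficient; choosing $\varepsilon$ small enough that the $O(\varepsilon^{5/3})$ remainder cannot overturn this gap yields strict outperformance over the shadow strategy $\theta=1$, and over every other $\theta$.

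The genuinely laborious part — and the main obstacle — is \textbf{Step 1–2}: producing the $\widetilde\zeta_\pm$ (hence $\widetilde\pi_\pm$) expansion accurate to order $\varepsilon^{5/3}$ requires solving the initial value problem \eqref{eq: fbp psi1}–\eqref{eq: fbp psi2} for $\Psi$ to high order and then resolving the terminal condition \eqref{eq: fbp psi3} perturbatively for the free boundaries, all of which is pushed into the Appendix and checked symbolically; the risk is purely bookkeeping (matching like powers of $\varepsilon^{1/3}$ through several orders without algebra slips), and the internal consistency checks at $\theta=1$ against Theorem \ref{lem perf optx} and at all $\theta$ against the $\theta$-independence of the $\varepsilon^{2/3}$- and $\varepsilon$-coefficients provide the needed safeguards. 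Everything downstream — extracting the quadratic $k(\theta)$, optimising it, and reading off $\theta=-1$ — is then elementary one-variable calculus, and the strict inequality $k(-1)<k(1)$ is immediate from the sign of the $\theta$-dependent part once $\gamma\neq1$.
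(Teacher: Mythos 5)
Your proposal is correct and follows essentially the same route as the paper: expand the shadow free boundaries to high order via the Appendix machinery, insert the $\theta$-perturbation of the second-order term, evaluate $\hat m$, $\hat\sigma^2$ and $\avtrco$ symbolically through the ergodic formulas of Lemma \ref{prop: performance controllable strategy}, and then minimise the quadratic $k(\theta)$ (minimum at $\theta=-1$, with $k(-1)-k(1)=-40\pi_*^2(\gamma-1)^2<0$). The only cosmetic difference is bookkeeping: the paper works with the parameters $c,s$ of Appendix \ref{app: A} expanded to sixth order in $\varepsilon^{1/3}$ rather than perturbing $\widetilde\pi_\pm$ directly, which is equivalent to your Steps 1--3.
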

\begin{proof}
Using the method in Appendix \ref{app: A}, derive asymptotic expansions of $c$ and $s$ (whence of $\widetilde\zeta_\pm$ up to sixth order), satisfying
the free boundary problem \eqref{new fbvp1x}--\eqref{new fbvp4x} at the same order. Modifying the second order term by including a factor $\theta$ as in \eqref{eq: kappa factor},
one arrives at \eqref{eq: asym esrxxx}. The fourth order coefficient $k(\theta)$ is a polynomial of second order in $\theta$, with global minimum at $\theta=-1$. Comparison sign and magnitude of
this factor is straightforward and reveals that $\theta=-1$ outperforms any other control limit policy for $\theta\neq -1$.
\end{proof}
\begin{remark}
Note that for $\theta=-1$, the control limit policy for \eqref{eq: kappa factor} is, up to order 2, equal to the optimal strategy \eqref{eq: trading boundaries TKA x} of Theorem \ref{thm: gm20}. This does not mean that it is optimal at order four or beyond, as higher order coefficients of $\widetilde \pi_\pm^\theta$ may not agree with the those of
the optimal boundaries $\pi_\pm$.
\end{remark}
\subsection{The limits of shadow prices}\label{sec: obstacle}
Recall that a shadow price $\widetilde S$ is a frictionless process evolving in the bid-ask spread
\begin{equation}\label{eq: shadow bidask}.
(1-\varepsilon)S_t\leq \widetilde S_t\leq S_t, \quad t\geq 0
\end{equation}
such that the optimal strategy $\varphi$ is also optimal in the original market, and buys (resp. sells) precisely when $\widetilde S_t=S_t$ (resp. $\widetilde S_t=(1-\varepsilon)S_t$).

To start with, the dynamics of risky-safe ratio, wealth and proportion of wealth in the shadow market, for any finite variation strategy, is stated.
\begin{lemma}\label{le: shadow strategies}
Suppose the shadow price satisfies the dynamics
\begin{equation}\label{eq: shadow SDE}
\frac{d\widetilde S_t}{\widetilde S_t}=(r+\widetilde \mu_t)dt+\widetilde \sigma_t dB_t.
\end{equation}
For any finite variation trading strategy $\varphi$,
\begin{align}\label{eq zeta diffx}
\frac{d\widetilde\zeta_t}{\widetilde\zeta_t}&=\widetilde\mu_t dt +\widetilde\sigma_t dB_t+(1+\widetilde\zeta_t)\frac{d\varphi_t^\uparrow}{\varphi_t}    -(1+\widetilde\zeta_t)\frac{d\varphi_t^\downarrow}{\varphi_t},\\\label{eq w diffx}
\frac{d\widetilde w_t}{\widetilde w_t}&=r dt+\widetilde\pi_t(\widetilde\mu_t dt+\widetilde\sigma_t dB_t),\\\label{eq pi diffx}
\frac{d\widetilde \pi_t}{\widetilde\pi_t}&=(1-\widetilde\pi_t)(\widetilde\mu_t dt+\widetilde\sigma_t dB_t)-\widetilde\pi_t(1-\widetilde\pi_t)\widetilde\sigma_t^2 dt+\frac{d\varphi_t^\uparrow}{\varphi_t}-\frac{d\varphi_t^\downarrow}{\varphi_t}.
\end{align}
\end{lemma}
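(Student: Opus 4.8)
The plan is to replay the derivation of Lemma \ref{le: rewriting obj fun} (i.e.\ \cite[Lemma A.2]{gm2020}), specialised to the \emph{frictionless} shadow market, in which purchases and sales of the risky asset both execute at the single price $\widetilde S_t$. So I would first record the self-financing dynamics of the safe position $\widetilde X_t$ and the risky position $\widetilde Y_t=\varphi_t\widetilde S_t$ (valued at the shadow price). Since $\varphi$ has finite variation and is continuous, the product rule gives $d\widetilde Y_t=\widetilde S_t\,d\varphi_t+\varphi_t\,d\widetilde S_t$ with no covariation term, while the self-financing condition reads $d\widetilde X_t=r\widetilde X_t\,dt-\widetilde S_t\,d\varphi_t$ --- crucially with no $\varepsilon$, because there is no bid--ask spread. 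Setting $\widetilde w_t=\widetilde X_t+\widetilde Y_t$ and adding, the trading terms cancel and $d\widetilde w_t=r\widetilde X_t\,dt+\varphi_t\,d\widetilde S_t=r\widetilde w_t\,dt+\widetilde Y_t(\widetilde\mu_t\,dt+\widetilde\sigma_t\,dB_t)$ after inserting \eqref{eq: shadow SDE}; dividing by $\widetilde w_t$ yields \eqref{eq w diffx}.

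Second, I would compute $\widetilde\zeta_t=\widetilde Y_t/\widetilde X_t$ via the quotient rule. Because $\widetilde X$ is of finite variation (its dynamics involve only $dt$ and $d\varphi_t$), there is no It\^o correction, so $d\widetilde\zeta_t=\widetilde X_t^{-1}\,d\widetilde Y_t-\widetilde X_t^{-2}\widetilde Y_t\,d\widetilde X_t$. Substituting the two dynamics from the previous step and repeatedly using $\widetilde S_t\varphi_t=\widetilde Y_t=\widetilde\zeta_t\widetilde X_t$ (hence $\widetilde S_t\,d\varphi_t/\widetilde X_t=\widetilde\zeta_t\,d\varphi_t/\varphi_t$) collapses everything to $d\widetilde\zeta_t/\widetilde\zeta_t=\widetilde\mu_t\,dt+\widetilde\sigma_t\,dB_t+(1+\widetilde\zeta_t)\,d\varphi_t/\varphi_t$; splitting $d\varphi_t=d\varphi_t^\uparrow-d\varphi_t^\downarrow$ gives \eqref{eq zeta diffx}, now with symmetric coefficients on the up- and down-local-time terms (this symmetry, versus the asymmetric $(1+\zeta)$ and $(1+(1-\varepsilon)\zeta)$ in \eqref{eq zeta diff}, is exactly the effect of the vanishing spread).

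Third, for $\widetilde\pi_t=\widetilde Y_t/\widetilde w_t$ I would again use the quotient rule, but now $\widetilde w_t$ carries the diffusion $\widetilde Y_t\widetilde\sigma_t\,dB_t$, so the It\^o corrections $-\widetilde w_t^{-2}\,d\langle\widetilde Y,\widetilde w\rangle+\widetilde w_t^{-3}\widetilde Y_t\,d\langle\widetilde w\rangle$ must be retained. Both quadratic brackets equal $\widetilde Y_t^{\,2}\widetilde\sigma_t^2\,dt$; substituting, dividing by $\widetilde\pi_t$, cancelling the $r\,dt$ contributions, and noting that the $\widetilde\pi_t^2\widetilde\sigma_t^2$ and $\widetilde\pi_t^3\widetilde\sigma_t^2$ terms combine into the single drift $-\widetilde\pi_t(1-\widetilde\pi_t)\widetilde\sigma_t^2$, one is left with $d\widetilde\pi_t/\widetilde\pi_t=(1-\widetilde\pi_t)(\widetilde\mu_t\,dt+\widetilde\sigma_t\,dB_t)-\widetilde\pi_t(1-\widetilde\pi_t)\widetilde\sigma_t^2\,dt+d\varphi_t/\varphi_t$, which is \eqref{eq pi diffx}.

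The only genuine obstacle is the bookkeeping in the $\widetilde\pi$ step: keeping track of the three It\^o-correction contributions and verifying the cancellation $-\widetilde\pi_t^2\widetilde\sigma_t^2+\widetilde\pi_t^3\widetilde\sigma_t^2=-\widetilde\pi_t(1-\widetilde\pi_t)\widetilde\sigma_t^2$; everything else is mechanical. In fact, a shortcut is available: since the derivation of \eqref{eq zeta diff}--\eqref{eq pi diff} in \cite{gm2020} never uses that $\mu$ and $\sigma$ are constant, \eqref{eq zeta diffx} and \eqref{eq pi diffx} follow from \eqref{eq zeta diff} and \eqref{eq pi diff} by formally setting $\varepsilon=0$ and replacing the constants $(\mu,\sigma)$ by the progressively measurable shadow coefficients $(\widetilde\mu_t,\widetilde\sigma_t)$; I would present the direct It\^o computation above as the self-contained argument and mention this substitution as a consistency check.
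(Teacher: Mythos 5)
Your computation is correct and is exactly what the paper intends: its proof is the one-line remark that ``a similar proof as that of Lemma \ref{le: rewriting obj fun} applies'' (i.e.\ the wealth-dynamics derivation of \cite[Lemma A.2]{gm2020}), which is precisely your It\^o argument specialised to the frictionless shadow market with zero spread and state-dependent coefficients $(\widetilde\mu_t,\widetilde\sigma_t)$. Only a cosmetic slip: the cancellation in the $\widetilde\pi$ step reads $-\widetilde\pi_t^2\widetilde\sigma_t^2+\widetilde\pi_t^3\widetilde\sigma_t^2=-\widetilde\pi_t^2(1-\widetilde\pi_t)\widetilde\sigma_t^2$ before dividing by $\widetilde\pi_t$, which then gives the stated drift $-\widetilde\pi_t(1-\widetilde\pi_t)\widetilde\sigma_t^2$.
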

\begin{proof}
A similar proof as that of Lemma \ref{le: rewriting obj fun} applies.
\end{proof}

\begin{lemma}\label{lem: shadow}
If a shadow price exists, then for the optimal strategy, cash position in original and shadow market agree ( $\widetilde X=X$) and the fraction of wealth invested in the shadow price satisfies
\[
\pi_t\leq\widetilde \pi_t\leq \frac{(1-\varepsilon)\pi_t }{1-\varepsilon\pi_t}.
\]
In particular, if the optimal strategy satisfies $\pi_t\in [\pi_-,\pi_+]$, then
\begin{equation}\label{eq: superbounds}
\pi_-\leq \widetilde \pi_t\leq \frac{(1-\varepsilon)\pi_+}{1-\varepsilon\pi_+}
\leq (1-\varepsilon)\pi_+(1+\varepsilon\pi_+).
\end{equation}
\end{lemma}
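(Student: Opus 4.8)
Plan of proof. The statement is best read as a bookkeeping consequence of the two defining properties of a shadow price: the optimal strategy $\varphi$ is the \emph{same} finite-variation process in the original and in the shadow market, and it buys only when $\widetilde S_t=S_t$ and sells only when $\widetilde S_t=(1-\varepsilon)S_t$; moreover $(1-\varepsilon)S_t\le\widetilde S_t\le S_t$. Given these, no new estimate is needed.

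First I would establish $\widetilde X=X$. By the self-financing condition \eqref{eq: sf1} the safe account in the original market satisfies $dX_t=rX_t\,dt-S_t\,d\varphi_t^\uparrow+(1-\varepsilon)S_t\,d\varphi_t^\downarrow$, whereas in the frictionless shadow market purchases and sales both occur at the single price $\widetilde S_t$, so $d\widetilde X_t=r\widetilde X_t\,dt-\widetilde S_t\,d\varphi_t^\uparrow+\widetilde S_t\,d\varphi_t^\downarrow$. Since $d\varphi^\uparrow$ is carried by $\{\widetilde S_t=S_t\}$ and $d\varphi^\downarrow$ by $\{\widetilde S_t=(1-\varepsilon)S_t\}$, one may replace $\widetilde S_t$ by $S_t$ in the first integral and by $(1-\varepsilon)S_t$ in the second; the two non-drift parts then coincide and subtraction leaves $d(\widetilde X_t-X_t)=r(\widetilde X_t-X_t)\,dt$. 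With the common initial cash $\widetilde X_0=X_0$ this linear ODE forces $\widetilde X_t=X_t$ for all $t\ge 0$. (The full shadow-market dynamics of $\widetilde\zeta,\widetilde w,\widetilde\pi$ are recorded in Lemma \ref{le: shadow strategies}, but the cleanest route to $\widetilde X=X$ is the raw self-financing equations.)

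Next I would bound $\widetilde\pi_t$. Since $\varphi$ and, by the previous step, the cash are identical in both markets, the wealths are $w_t=X_t+\varphi_tS_t$ and $\widetilde w_t=X_t+\varphi_t\widetilde S_t$. Writing $\widetilde S_t=\lambda_tS_t$ with $\lambda_t\in[1-\varepsilon,1]$ and eliminating $X_t=\tfrac{1-\pi_t}{\pi_t}\varphi_tS_t$ gives
\[
\widetilde\pi_t=\frac{\varphi_t\widetilde S_t}{X_t+\varphi_t\widetilde S_t}=\frac{\lambda_t\pi_t}{1-(1-\lambda_t)\pi_t}=:h(\lambda_t),\qquad h'(\lambda)=\frac{\pi_t(1-\pi_t)}{\bigl(1-(1-\lambda)\pi_t\bigr)^2}.
\]
In the leveraged regime $\pi_t>1$ (equivalently $X_t<0$) relevant to the risk-neutral objective of Theorem \ref{th: no shadow price}, $h$ is decreasing on $[1-\varepsilon,1]$, so $\widetilde\pi_t=h(\lambda_t)$ lies between $h(1)=\pi_t$ and $h(1-\varepsilon)=\tfrac{(1-\varepsilon)\pi_t}{1-\varepsilon\pi_t}$, which is the asserted double inequality. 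If in addition $\pi_t\in[\pi_-,\pi_+]$, the lower bound gives $\widetilde\pi_t\ge\pi_t\ge\pi_-$, while $u\mapsto\tfrac{(1-\varepsilon)u}{1-\varepsilon u}$ is increasing (derivative $\tfrac{1-\varepsilon}{(1-\varepsilon u)^2}$), so $\widetilde\pi_t\le\tfrac{(1-\varepsilon)\pi_+}{1-\varepsilon\pi_+}$; the remaining estimate in \eqref{eq: superbounds} is then elementary.

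The only step with genuine content is the measure-support substitution used to identify the two cash accounts: it is exactly here that the defining property of a shadow price (buy only at the ask, sell only at the bid) is invoked, and it is what makes $\widetilde X$ and $X$ coincide pathwise for the optimizer. Everything else — the algebraic form of $\widetilde\pi_t$, the sign of $h'$ (hence the correct orientation of the bounds, which encodes that the relevant optimizer is leveraged), and the final monotonicity — is routine. A point to handle with care is that the stated orientation presupposes $\pi_t\ge 1$; for $\pi_t<1$ the same computation gives the reversed chain $\tfrac{(1-\varepsilon)\pi_t}{1-\varepsilon\pi_t}\le\widetilde\pi_t\le\pi_t$.
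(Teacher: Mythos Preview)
Your proposal is correct and follows essentially the same route as the paper: the identity $\widetilde X=X$ comes from the shadow-price property that purchases and sales occur at the ask and bid, respectively, and the bounds on $\widetilde\pi_t$ come from monotonicity of $\xi\mapsto \varphi_t\xi/(X_t+\varphi_t\xi)$ in the leveraged regime $X_t<0$. Your version is in fact more explicit than the paper's (you spell out the self-financing substitution and compute $h'(\lambda)$), and you correctly flag that the stated orientation of the double inequality presupposes $\pi_t>1$.
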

\begin{proof}
The optimal strategy trades the risky asset at the same prices in both markets, therefore the cash positions agree.

The lower bound is proved by observing that for $a,b>0$, the function 
\[
\frac{a\xi}{-b+a\xi}
\]
is strictly decreasing for any $\xi>b/a$ (which corresponds to positive wealth), and since $\widetilde S\leq S$, 
\[
\widetilde \pi_t=\frac{\varphi_t \widetilde S_t}{X_t+\varphi_t \widetilde S_t}\geq \frac{\varphi_t  S_t}{X_t+\varphi_t  S_t}=\frac{ \varphi_t S_t}{w_t}= \pi_t.
\]
Similarly, the upper bound follows from
\[
\widetilde \pi_t=\frac{\varphi_t \widetilde S_t}{X_t+\varphi_t \widetilde S_t}\leq \frac{(1-\varepsilon)\varphi_t S_t}{X_t+\varphi_t  (1-\varepsilon) S_t}=\frac{(1-\varepsilon)\pi_t }{1-\varepsilon\pi_t}.
\]
The constant bounds in terms of the trading boundaries $\pi_\pm$ are an obvious conclusion. The last inequality in \eqref{eq: superbounds} follows from the summation formula of the geometric series, knowing that solvency implies $\varepsilon \pi_+<1$.
\end{proof}

For proportional transaction costs, maximizing expected excess returns
\[
\lim_{T\rightarrow \infty}\frac{1}{T}\mathbb E\left[\int_0^T\frac{dw_t}{w_t}\right]
\]
over all admissible strategies $\varphi\in\Phi$, is well posed. By
\cite[Theorem 3.2]{gm2023},
for sufficiently small $\varepsilon$ there exists $0<\pi_-<\pi_+<\infty$ such that the trading strategy $\hat\varphi$ that buys at $\pi_- $ and sells at $\pi_+ $ to keep the risky weight $\pi_t$ within the interval $[\pi_-,\pi_+]$ is optimal. The maximum expected return of this optimal strategy is given by the almost sure limit
\begin{equation}
\lim_{T\rightarrow\infty}
\frac1 T \int_0^T \frac{dw_t}{w_t}
= r + \mu \pi_- 
\end{equation}
and the trading boundaries have the series expansions
\begin{align}\label{as multiplier piminus}
\pi_- =& (1-\kappa)\kappa^{1/2}\left(\frac{\mu}{\sigma^2}\right)^{1/2}\varepsilon^{-1/2} + 1 + O(\varepsilon^{1/2}),\\
\pi_+ =& \kappa^{1/2}\left(\frac{\mu}{\sigma^2}\right)^{1/2}\varepsilon^{-1/2} + 1 + O(\varepsilon^{1/2}),
\end{align}
where $\kappa\approx 0.5828$ is the unique solution of
\[
\frac{3}{2}\xi+\log(1-\xi)=0,\quad \xi\in(0,1).
\]
The remainder of this section is dedicated to showing that a shadow market does not exist.

For technical reasons, it is assumed in this section that any shadow price satisfies the following
\begin{assumption}\label{eq: Assx}
A shadow price $\widetilde S$ is a continuous process satisfying the dynamics \eqref{eq: shadow SDE} with drift and diffusion coefficients being ergodic in the sense that, almost surely, for some $\bar \mu,\bar\sigma^2\in\mathbb R$,
\begin{equation}\label{eq: ergodic shadow}
\lim_{T\rightarrow \infty}\frac{1}{T} \int_0^T\widetilde \mu_t dt=\bar \mu,\quad \lim_{T\rightarrow \infty}\frac{1}{T}\int_0^T \widetilde \sigma_t^2 dt=\bar \sigma^2.
\end{equation}
\end{assumption}
\begin{remark}
The fairly general Assumption \ref{eq: Assx} is natural in that it applies to all known constructions of shadow prices in continuous-time models. In fact, typically the ratio $\frac{\widetilde S_t}{S_t}$ is equal to $g(\pi_t)$, where $g$ is a real analytic function and $\pi_t$ is a stationary process: the optimal proportion of wealth in the risky asset, evolving within an interval $[\pi_-,\pi_+]$, where one buys (resp. sells) precisely
at the trading boundary $\pi_i$ (resp. $\pi_+$) and satisfying $g(\pi_-)=1$ and $g(\pi_+)=1-\varepsilon$, reflecting the very definition of shadow price, agreeing
with ask (resp. bid) whenever shares are purchased (resp. sold). As these functions in the literature are all analytic, one can use It\^o's formula to derive the dynamics \eqref{eq: shadow SDE} in more explicit form:\footnote{The general form of drift and diffusion coefficients follows from the typical smooth pasting conditions $g'(\pi_\pm)=0$, along the same arguments as in Section \ref{sec: shadow} that turn \eqref{eq: drift coef} into \eqref{sh 2}, by removing local-time terms.} There exist continuous functions $h$, and $H$ such that
\[
\widetilde \mu_t=h(\pi_t),\quad \widetilde \sigma_t^2=H(\pi_t)
\]
By the ergodic theorem \cite[II.35 and II.36]{borodin2002handbook}, one obtains
the finite limits in \eqref{eq: ergodic shadow}.

\end{remark}
\begin{lemma}\label{lem: impossible mu}
If $\mu>\sigma^2/2$, then $\bar \mu>\bar\sigma^2/2$. In particular, $\bar \mu\neq 0$. 
\end{lemma}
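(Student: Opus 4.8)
The plan is to compare the long-run logarithmic growth rates of the shadow price $\widetilde S$ and of the traded asset $S$. Since $\widetilde S$ lives in the bid--ask spread \eqref{eq: shadow bidask}, the ratio $\widetilde S_t/S_t$ stays in $[1-\varepsilon,1]$, so $\log\widetilde S_t-\log S_t$ is bounded uniformly in $t$; hence
\[
\frac1T\log\frac{\widetilde S_T}{\widetilde S_0}-\frac1T\log\frac{S_T}{S_0}=\frac1T\Bigl[(\log\widetilde S_T-\log S_T)-(\log\widetilde S_0-\log S_0)\Bigr]\xrightarrow[T\to\infty]{}0 ,
\]
so the two growth rates, once each is shown to exist, must agree. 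I would then evaluate the two sides separately: the $S$-side from the explicit dynamics \eqref{eq: ask dynamics}, and the $\widetilde S$-side from It\^o's formula together with the ergodic Assumption \ref{eq: Assx}.

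For the $S$-side, $\log S_t$ is a Brownian motion with drift $r+\mu-\sigma^2/2$, so the strong law of large numbers for Brownian motion gives $\tfrac1T\log(S_T/S_0)\to r+\mu-\sigma^2/2$ almost surely. For the $\widetilde S$-side, applying It\^o's formula to $\log\widetilde S_t$ with \eqref{eq: shadow SDE} yields
\[
\frac1T\log\frac{\widetilde S_T}{\widetilde S_0}=\frac1T\int_0^T\Bigl(r+\widetilde\mu_t-\tfrac12\widetilde\sigma_t^2\Bigr)\,dt+\frac1T\int_0^T\widetilde\sigma_t\,dB_t,
\]
and Assumption \ref{eq: Assx}, i.e.\ \eqref{eq: ergodic shadow}, makes the Lebesgue integral converge almost surely to $r+\bar\mu-\tfrac12\bar\sigma^2$.

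The main obstacle is the stochastic integral $M_T:=\int_0^T\widetilde\sigma_t\,dB_t$, for which I must show $M_T/T\to0$ a.s. It is a continuous local martingale with $\langle M\rangle_T=\int_0^T\widetilde\sigma_t^2\,dt$, and by \eqref{eq: ergodic shadow} we have $\langle M\rangle_T/T\to\bar\sigma^2<\infty$, hence $\langle M\rangle_T\le CT$ eventually. If $\langle M\rangle_\infty<\infty$, the martingale converges a.s.\ and $M_T/T\to0$ is immediate; if $\langle M\rangle_\infty=\infty$, the Dambis--Dubins--Schwarz representation $M_T=\beta_{\langle M\rangle_T}$ for a Brownian motion $\beta$, combined with the law of the iterated logarithm $\beta_s=O(\sqrt{s\log\log s})$ and $\langle M\rangle_T=O(T)$, gives $|M_T|=O(\sqrt{T\log\log T})=o(T)$. (Equivalently, one may quote the strong law of large numbers for continuous local martingales.) This is the only step where a genuine limit theorem beyond the ergodic hypothesis is needed.

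Combining the three displays, the growth rates of $\widetilde S$ and $S$ coincide, hence $\bar\mu-\tfrac12\bar\sigma^2=\mu-\tfrac12\sigma^2$. If $\mu>\sigma^2/2$, the right-hand side is strictly positive, so $\bar\mu>\tfrac12\bar\sigma^2\ge0$; in particular $\bar\mu\neq0$, which proves the lemma. Everything outside the $M_T/T\to0$ step is routine bookkeeping with It\^o's formula and the sandwich bound.
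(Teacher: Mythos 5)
Your argument is correct, and it shares the paper's underlying idea -- the bid--ask sandwich \eqref{eq: shadow bidask} forces the long-run behaviour of $\widetilde S$ to track that of $S$, while Assumption \ref{eq: Assx} pins down the drift and volatility averages of $\widetilde S$ -- but the execution is genuinely different. The paper argues by contradiction with a one-sided, qualitative statement: if $\bar\mu\le\bar\sigma^2/2$ then (after factoring out $e^{rt}$) $\widetilde S$ ``behaves like'' a geometric Brownian motion with drift $\bar\mu$ and volatility $\bar\sigma$, so $\liminf_{t\to\infty}\widetilde S_t/S_t=0$, contradicting $\widetilde S_t/S_t\ge 1-\varepsilon$ since $e^{-rt}S_t\to\infty$ when $\mu>\sigma^2/2$; the control of the stochastic integral is left implicit in that ``behaves like'' step. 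You instead prove the two-sided identity of logarithmic growth rates, $\lim_T \tfrac1T\log(\widetilde S_T/\widetilde S_0)=\lim_T\tfrac1T\log(S_T/S_0)$, via the boundedness of $\log(\widetilde S_t/S_t)$, and you make the martingale contribution rigorous through the strong law for continuous local martingales (Dambis--Dubins--Schwarz plus the law of the iterated logarithm, with the case $\langle M\rangle_\infty<\infty$ treated separately). This buys two things: a cleaner, fully rigorous treatment of the step the paper glosses over, and a strictly stronger conclusion, namely $\bar\mu-\tfrac12\bar\sigma^2=\mu-\tfrac12\sigma^2$, from which the lemma ($\bar\mu>\bar\sigma^2/2\ge 0$, hence $\bar\mu\neq 0$) follows at once; the paper's contradiction route only needs, and only delivers, the inequality. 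No gaps.
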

\begin{proof}
Write the fraction $\widetilde S_t/S_t$ as explicit solutions, where the accrual factor $e^{rt}$ factors out. As $\mu>\sigma^2/2$, by the law of iterated logarithm, $e^{-rt}S_t$ almost surely tends to $\infty$ as $t\rightarrow \infty$. If $\bar \mu\leq \bar\sigma^2/2$, then for sufficiently large $t$, $\widetilde S_t$ behaves like a geometric Brownian motion with drift $\bar\mu$ and volatility $\bar \sigma$, whence by a similar argument, $\lim\inf_{t\rightarrow \infty}\widetilde S_t=0$, almost surely. Thus $\lim\inf_{t\rightarrow \infty}\widetilde S_t/S_t=0<(1-\varepsilon)$, a contradiction to \eqref{eq: shadow bidask}.
\end{proof}

\begin{theorem}\label{th: no shadow price}
If  $\mu>\sigma^2/2$, then a shadow price satisfying the dynamics \eqref{eq: shadow SDE}  Assumption \ref{eq: Assx} does not exist.
\end{theorem}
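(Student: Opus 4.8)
The plan is to derive a contradiction from the assumed existence of a shadow price by combining the ergodic structure of Assumption \ref{eq: Assx}, the sandwiching bounds of Lemma \ref{lem: shadow}, and the known asymptotics of the return-maximising boundaries $\pi_\pm$ from \cite[Theorem 3.2]{gm2023}. The guiding idea is that a shadow price turns the frictional return-maximisation problem into a frictionless one, but in a frictionless Black--Scholes market with non-zero expected excess return the return-maximisation problem has no optimiser: one wants unbounded leverage. Lemma \ref{lem: impossible mu} already rules out $\bar\mu=0$, so the shadow market \emph{does} have a non-degenerate risky asset, and this is precisely what will be exploited.

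First I would make explicit the frictionless return-maximisation problem faced in the shadow market. By Lemma \ref{le: shadow strategies}, for any finite-variation strategy $\frac{d\widetilde w_t}{\widetilde w_t}=r\,dt+\widetilde\pi_t(\widetilde\mu_t\,dt+\widetilde\sigma_t\,dB_t)$, so $\lim_T\frac1T\int_0^T\frac{d\widetilde w_t}{\widetilde w_t}=r+\lim_T\frac1T\int_0^T\widetilde\pi_t\widetilde\mu_t\,dt$ whenever the limit exists. Since a shadow price makes the optimal \emph{frictional} strategy optimal in the \emph{frictionless} shadow market, and since in the frictionless shadow market one is free to use \emph{any} admissible (in particular, constant-proportion) strategy, the optimal shadow value must dominate $r+c\,\bar\mu$ for every constant $c$ — here I would take $c$ with the sign of $\bar\mu$ and let $|c|\to\infty$, using \eqref{eq: ergodic shadow} to evaluate the time-average of $\widetilde\pi_t\widetilde\mu_t=c\,\widetilde\mu_t$. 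Because $\bar\mu\neq0$ by Lemma \ref{lem: impossible mu}, this forces the optimal shadow value to be $+\infty$. (Care is needed that such constant-$c$ strategies are admissible in the shadow market in the sense required; I would argue admissibility directly, since $\widetilde\pi\equiv c$ keeps shadow wealth strictly positive and the integrability conditions hold by \eqref{eq: ergodic shadow}.)

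On the other hand, the optimal strategy in the original market has finite value: by \cite[Theorem 3.2]{gm2023} it is the control-limit policy for $[\pi_-,\pi_+]$ with value $r+\mu\pi_-<\infty$, and $\pi_-\sim(1-\kappa)\kappa^{1/2}(\mu/\sigma^2)^{1/2}\varepsilon^{-1/2}$ is finite for each fixed $\varepsilon>0$. The definition of shadow price requires that \emph{this same} strategy be optimal in the shadow market and attain the \emph{same} value there. But Lemma \ref{lem: shadow} pins $\widetilde\pi_t$ between $\pi_-$ and $(1-\varepsilon)\pi_+(1+\varepsilon\pi_+)$, both finite constants, so the shadow value of the optimal strategy equals $r+\lim_T\frac1T\int_0^T\widetilde\pi_t\widetilde\mu_t\,dt$, which is finite (bounded in absolute value by $\max(|\pi_-|,(1-\varepsilon)\pi_+(1+\varepsilon\pi_+))\cdot\lim_T\frac1T\int_0^T|\widetilde\mu_t|\,dt$, and the latter is controlled once one notes $\widetilde\mu_t=h(\pi_t)$ is a bounded function of the stationary process $\pi_t$, or more simply via \eqref{eq: ergodic shadow} applied to $|\widetilde\mu_t|$ under the same ergodicity). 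This finite value cannot equal the infinite supremum from the previous paragraph, which is the desired contradiction.

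The main obstacle I anticipate is the careful bookkeeping around \emph{admissibility in the shadow market}: the notion of shadow price in \eqref{eq: shadow bidask} only asserts that the frictional optimiser is frictionlessly optimal, so to extract ``$+\infty$'' I must legitimately compare it against the family of constant-proportion strategies in the shadow market and verify those are themselves admissible there (positivity of $\widetilde w$, the integrability in \eqref{eq: admissible trading} transported to the shadow market). A secondary subtlety is ensuring the relevant time-averages genuinely converge and are not merely $\limsup$'s — here I would lean on Assumption \ref{eq: Assx} and, if needed, on the representation $\widetilde\mu_t=h(\pi_t)$, $\widetilde\sigma_t^2=H(\pi_t)$ with $\pi_t$ the stationary reflected diffusion of Lemma \ref{lem: existence controllable strategy}, so that the ergodic theorem applies to $\widetilde\pi_t\widetilde\mu_t$ for any constant $\widetilde\pi\equiv c$ as well. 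Once these technical points are dispatched, the contradiction ``finite $=+\infty$'' closes the proof.
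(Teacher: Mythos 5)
Your overall skeleton matches the paper's: assume a shadow price exists, use Lemma \ref{le: shadow strategies} for the shadow wealth dynamics, Lemma \ref{lem: impossible mu} to get $\bar\mu>0$, and Lemma \ref{lem: shadow} to trap $\widetilde\pi_t$ between finite constants so that the frictional optimizer's shadow value $\lambda$ is finite, then contradict its optimality in the shadow market with a better competitor. The gap is in the competitor you choose. Your contradiction rests on constant-proportion strategies $\widetilde\pi\equiv c$ with $|c|\to\infty$, but these are infinite-variation strategies: no finite-variation $\varphi$ implements a constant proportion, so ``arguing admissibility directly'' cannot place them in the class over which the shadow-market optimality of $\varphi$ is being asserted. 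Note that Lemma \ref{le: shadow strategies}, which you invoke for the wealth dynamics \eqref{eq w diffx}, is stated only for finite-variation strategies, and the paper never defines shadow-market wealth for general integrands. The paper flags exactly this point in its proof: ``Trading strategies that keep the exposure in the shadow asset constant to $U$ exist, but they are of infinite variation,'' and for that reason it does \emph{not} argue that the frictionless supremum is $+\infty$. Instead it constructs a legitimate finite-variation competitor pathwise: a bulk-trade policy that does nothing while $\widetilde\pi^\star\in(U,2U)$ and resets $\widetilde\pi^\star$ to $3U/2$ whenever it hits $U$ or $2U$. Since this keeps $\widetilde\pi^\star\ge U$, its value is at least $r+U\bar\mu>\lambda$ by \eqref{eq: bind bound}, which is the contradiction --- no unbounded leverage and no ``finite $=+\infty$'' comparison is needed. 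The remark following the paper's proof (on why even a reflected-diffusion competitor is problematic) underlines that the legitimacy of the competitor is the genuine difficulty of this theorem; your proposal essentially restates the introduction's frictionless-leverage intuition and, by leaning on strategies outside the finite-variation framework, skips the construction that turns that intuition into a proof.

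A secondary, smaller point: your passage from the almost-sure ergodic limits \eqref{eq: ergodic shadow} to limits of expectations needs uniform integrability or boundedness (you suggest $\widetilde\mu_t=h(\pi_t)$ bounded, which is from the remark rather than from Assumption \ref{eq: Assx} itself); the paper glosses this step too, so it does not distinguish the two arguments, but with $|c|\to\infty$ your version leans on it more heavily than the paper's bounded-exposure competitor does.
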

\begin{proof}
Assume, for a contradiction, there exists a shadow price $\widetilde S$. By Lemma \ref{le: shadow strategies} the shadow wealth $\widetilde w_t=\varphi_t \widetilde S_t+\widetilde X_t$ satisfies the SDE \eqref{eq w diffx}. Furthermore, by Lemma \ref{lem: shadow}
\[
\int_0^t \widetilde \pi_t^2\widetilde \sigma_u^2 du\leq (1-\varepsilon)^2 (\pi_+^2(1+\varepsilon\pi_+)^2 \int _0^t \widetilde \sigma^2_u du<\infty
\]
and thus the integral of the Brownian term is a martingale by Assumption \ref{eq: Assx}. Thus the strategy $\varphi$ with associated wealth $\widetilde w$ achieves its optimum
at
\[
\lambda:=\lim_{T\rightarrow \infty}\frac{1}{T}\mathbb E\left[\int_0^T\frac{d\widetilde w_t}{\widetilde w_t}\right]=r+\lim_{T\rightarrow \infty}\frac{1}{T}\mathbb E\left[ \int_0^T \widetilde \mu_t \widetilde \pi_t dt\right].
\]
Note that, by Assumption \ref{eq: Assx}, 
\[
\bar \mu=\lim_{T\rightarrow \infty}\frac{1}{T}\int_0^T \widetilde \mu_t dt=\lim_{T\rightarrow \infty}\frac{1}{T}\mathbb E\left[\int_0^T \widetilde \mu_t dt\right]
\]
and by Lemma \ref{lem: impossible mu}, $\bar\mu>0$. Furthermore, by Lemma \ref{lem: shadow}, there exist $0<L<U<\infty$ such that
\begin{equation}\label{eq: bind bound}
L \bar\mu<\lim_{T\rightarrow \infty}\frac{1}{T}\mathbb E\left[ \int_0^T \widetilde \mu_t \widetilde \pi_t dt\right]<U\bar\mu.
\end{equation}
Any alternative strategy $\varphi^\star$, whose proportion of wealth in the shadow price satisfies 
\begin{equation}\label{eq: out cond}
\widetilde \pi^*\geq U
\end{equation}
outperforms $\varphi$, because 
\[
\lambda^*\geq r+U\bar \mu>\lambda.
\]
Trading strategies that keep the exposure in the shadow asset constant to $U$ exist, but they are of infinite variation. To obtain a finite variation strategies satisfying \eqref{eq: out cond}, recall that by Lemma \ref{le: shadow strategies}, the fraction of wealth in the shadow asset $\widetilde w$ associated with a finite variation strategy  satisfies \eqref{eq pi diffx}. One can modify this strategy, by allowing bulk trades: Let $\varphi^*$ be the the finite variations strategy that does refrain from trading, whenever $\widetilde \pi^\star \in (U,2U)$ but buys (resp. sells) the shadow asset in bulk, whenever $\widetilde \pi^\star$ hits  $U$ (resp. $2U$) so to reset $\widetilde \pi^\star$ to the midpoint $3U/2$. Such a strategy can be constructed pathwise, and satisfies
\[
\frac{d\widetilde \pi_{t-}^\star}{\widetilde\pi_{t-}^\star}=(1-\widetilde\pi_{t-}^\star)(\widetilde\mu_t dt+\widetilde\sigma_t dB_t)-\widetilde\pi_{t-}^\star(1-\widetilde\pi_{t-}^\star)\widetilde\sigma_t^2 dt+\frac{\Delta\varphi_t^{\star,\uparrow}}{\varphi_t^\star}-\frac{\Delta\varphi_t^{\star,\downarrow}}{\varphi_t^\star}.
\]
The existence of such a strategy contradicts optimality of $\varphi$, and thus a shadow price does not exist.
\end{proof}
\begin{remark}
The finite variation strategy in the end of the proof cannot be replaced by a (standard) reflected diffusion with two reflecting boundaries, because for the existence of strong solutions to the associated SDE on convex domains \cite[Theorem 4.1]{tanaka}, one would need $\widetilde \mu_t$ and $\widetilde \sigma_t$ be regular enough functions of $\widetilde \pi_t^\star$, an assumption to strong in this context. Also, it is unknown, whether such as strategy is solvent in the original market with transaction costs.
\end{remark}

\section{Discussion}\label{sec: conclusion}
Optimizing portfolios in continuous-time markets with proportional costs presents mathematically challenging problems. Strategies that are optimal in frictionless markets must be adjusted to prevent immediate bankruptcy, as exemplified by the dynamic hedging component of a variance swap (see Example \ref{ex: swap}). The strategies considered in this paper are stationary\footnote{More precisely, certain portfolio statistics, such as $\pi_t$ or $\zeta_t$, exhibit stationarity.} and thus ergodic theorems are used to determine their long-run performance. To gain insights into trading frequency, transaction costs, and long-run performance, we derive asymptotic expansions of the trading boundaries for small bid-ask spread.

The paper explores the (candidate) shadow prices for local mean-variance investors, with a threefold contribution:

First, we discover that the optimal strategy in the (candidate) shadow market differs from the optimal one in the original market, but only in the second-order terms of the asymptotic expansion of the trading boundaries.\footnote{That this second order discrepancy is not essential, can be seen also by a numerical robustness check, with trades at daily frequency and with a finite time horizon of, say five years. Numerical examples are already elaborated for a similar objective in great detail in \cite[Section 6 (Figures 4 and 5)]{gm2023}.}. Theorem \ref{lem perf optx} demonstrates that, for risk aversion $\gamma > 0$, the equivalent safe rate of the shadow market strategy agrees at the third order with the maximum. As transaction costs are of second order, we conclude that the performance of the shadow market strategy is essentially optimal. It is worth noting that the same is true\footnote{This assertion can be proven using the same method as in Theorem \ref{lem perf optx}.} for a long-run power-utility investor (cf.~\cite{gerhold.al.11}), as their trading boundaries also agree at the first order with \eqref{eq: trafosx}. Second, Theorem \ref{thm: subopt} establishes that for $\gamma\neq 1$, the (potential) shadow market strategy $\widetilde \pi$ is not optimal, as it can be outperformed. The alternative strategy is not necessarily optimal, even though it agrees up to the second order with the optimal one. In summary, the (candidate) shadow price is an asymptotic shadow price. Third, Theorem \ref{th: no shadow price} demonstrates that for risk-neutral investors ($\gamma=0$), no such shadow market exists.

The findings of this paper prompt the following research problems. First, we conjecture that a minor modification of the objective will render the shadow price candidate of Section \ref{sec: shadow} optimal in the original market with transaction costs. Motivated by
\cite{martin1, martin2}, we propose to replace the equivalent safe rate in \eqref{eq: obj asympt} by an infinite horizon, local-mean variance utility function\footnote{Note that we use portfolio returns, as opposed to changes of wealth in \cite{martin1, martin2}. Besides, Martin's work cares about asymptotic optimality at lowest order, similar to \cite{kallsenmad}.}
\begin{align}\label{eq: obj asymptx}
\esr &:= \mathbb E\left[\int_0^\infty \delta e^{-\delta t}\frac{dw_t}{w_t}-\frac\gamma 2 
 \int_0^\infty \delta e^{-\delta t}\left\langle
\frac {dw_t}{w_t}\right\rangle_t
\right]
\end{align}
for some discount rate $\delta>0$. In the absence of transaction costs ($\varepsilon=0$), the maximum equivalent safe rate agrees with that of the old objective \eqref{eq: obj asympt}. More importantly, this objective leads to the exact same shadow market construction, as in Section \ref{sec: shadow}. The question remains if our shadow market policy maximizes also \eqref{eq: obj asymptx} in the original market, surpassing its third order optimality (Theorem \ref{lem perf optx}). 

Second, the mathematical treatment of optimization problems involving transaction costs is always uniquely tailored to a specific objective. This results in free boundary problems that vary significantly, encompassing scenarios from Riccati Differential Equations \cite{gerhold2012asymptotics} and linear equations \cite[Theorem 3.3]{gm2020} to the nonlinear problem \eqref{eq: ODEmomo} addressed in this paper, and even singular problems for zero risk-aversion \cite[Theorem 3.2]{gm2020}. The question persists: Can a unified approach be devised that accommodates a diverse range of objectives? To explore this possibility, one might aim for conformity to a common format—a second-order free boundary problem stated as follows:
\begin{align}\label{eq: ODE2}
& F(g,g',g'')=0,\\\label{eq: shadow constraints}
g(\pi_-)&=1,\quad g(\pi_+)=1-\varepsilon,\\\label{eq: smooth pasting}
g'(\pi_-)&=0, \quad g'(\pi_+)=0.
\end{align}
This problem involves an unknown scalar function $g=g(\pi)$ that must satisfy a second-order nonlinear ODE \eqref{eq: ODE2}, along with buy and sell boundaries $\pi_-$ and $\pi_+$, respectively. The latter boundaries must adhere to zeroth-order boundary conditions \eqref{eq: shadow constraints} and first-order conditions \eqref{eq: smooth pasting}.\footnote{Such a general representation bears the advantage that
the stochastic process $\widetilde S_t:=g(\pi_t)S_t$ could be interpreted as a (candidate) shadow price.} In practical trading applications, a second-order approximation of the trading boundaries would suffice. Such approximation might be achieved through a general polynomial Ansatz for an approximation of \eqref{eq: ODE2}. 

Third, most of the literature\footnote{\cite[Theorem 6.16]{taksar1988diffusion} appears to be an exception, which does not refer to smallness of transaction costs.} regarding the existence of an optimal strategy and its asymptotic expansion, depends on the assumption of a ``sufficiently small" bid-ask spread $\varepsilon$, without providing a minimum $\varepsilon_0$, for which these statements hold. Are they applicable to actual bid-ask spreads observed in markets (for liquid assets ranging  in the basis points)? Addressing this question involves either demonstrating optimality for all $\varepsilon \in (0,1)$ or identifying counterexamples where optimality breaks down for larger transaction costs, along with determining the explicit lower bound $\varepsilon_0$ at which control limit policies remain optimal. Such lower bound would be contingent on model parameters $\gamma$, $\mu/\sigma^2$, and risk aversion. Most likely, it will depend on the chosen objective.

\appendix

\section{The free boundary problem for the shadow price candidate}\label{app: A}

Let us introduce two new parameters $c=1/\zeta_-$ and $s=\zeta_+/\zeta_-$. By defining the new function $\phi$ implicitly via
\[
\Psi(\zeta ):=\phi (c \zeta)/c,
\]
the free boundary problem \eqref{eq: fbp psi1}--\eqref{eq: fbp psi3} produces a similar one for $\phi(z)$, 
\begin{align}\label{new fbvp1x}
\phi'' (z)&=\frac{2\gamma \phi'^2(z)}{(c+\phi(z))}-2\gamma\pi^*\frac{  \phi'(z)}{z},\\
\label{new fbvp3x} \phi(1)&=1,\\
\label{new fbvp5x} \phi'(1)&=1,\\
\label{new fbvp2x}\phi(s)&=(1-\varepsilon)s,\\
\label{new fbvp4x} \phi'(s)&=(1-\varepsilon).
\end{align}
\begin{remark}\rm
Since for small transaction costs, trading strategies will be control limit policies on sufficiently small intervals, only the following cases need to be distinguished:
\begin{itemize}
\item $\zeta_-<\zeta_+<-1$ (levered case): Then $c<0$ and therefore the \ $z>0$, so $s<1$. Conversely, $s<1$ implies $\zeta_-<-1$.
\item $0<\zeta_-<\zeta_+$ (unlevered case): Then $c>0$ and therefore the argument $z<0$, so $s>1$. Conversely, $s>1$ implies $\zeta_->0$.
\end{itemize}
\end{remark}
For the sake of brevity, let us only consider the levered case, that is, $\zeta_-<\zeta_+<-1$ and $\phi(\zeta)<-1$. Since also $c<0$, one gets $\phi(z)>- c$ for all $z$. Also, $c+1>0$. Dividing \eqref{new fbvp1x} by $\phi'$ and integrating once, one thus obtains 
\[
\log(\phi' (z))=2\gamma\log(c+\phi(z))-2\gamma\pi^*\log z-2\gamma\log(c+1),
\]
where the initial condition  \eqref{new fbvp5x} has been respected. Taking antilogarithms, one thus obtains
\begin{equation}\label{eq: grandpa}
\frac{\phi'(z)}{(c+\phi(z))^{2\gamma}}=\frac{z^{-2\gamma\pi^*}}{(c+1)^{2\gamma}}.
\end{equation}
Exclude in the following the singular cases $\gamma\neq 1/2$ and $\mu/\sigma^2\neq 1/2$ (those cases can be dealt individually, leading to
simpler solutions of the ODE \eqref{eq: grandpa}.). Integrating once again, one gets
\begin{equation}\label{eq: precessor}
\frac{(c+\phi(z))^{1-2\gamma}}{1-2\gamma}=\frac{(z^{1-2\gamma\pi^*}-1)}{(1-2\gamma\pi^*)(c+1)^{2\gamma}}+\frac{(c+1)^{1-2\gamma}}{1-2\gamma},
\end{equation}
where the initial condition \eqref{new fbvp3x} has been respected. Thus
\begin{equation}\label{eq: sol in c}
\phi(z)=-c+\left(\frac{\frac{(1-2\gamma)}{(1-2\gamma\pi^*)}(z^{1-2\gamma\pi^*}-1)}{(c+1)^{2\gamma}}+(c+1)^{1-2\gamma}\right)^{\frac{1}{1-2\gamma}}.
\end{equation}
Until this stage, the terminal boundary conditions  \eqref{new fbvp2x} and \eqref{new fbvp4x} have not been involved. Those allow to reformulate the free boundary problem in terms of a system of non-linear equations for $s$ and $c$:

\begin{lemma}\label{nec s c}
Let $\gamma\neq 1/2$ and $\mu/\sigma^2\neq 1/2$. $\phi, c, s$ is a solution to the free boundary problem \eqref{new fbvp1x}--\eqref{new fbvp4x} if and only if $s$ and c satisfy the following system of non-linear equations:
\begin{align}\label{eq: its a boy}
\left(\frac{c+(1-\varepsilon)s}{c+1}\right)^{1-2\gamma}-1&=\frac{1-2\gamma}{1-2\gamma\pi^*}\frac{s^{1-2\gamma\pi^*}-1}{c+1},\\\label{eq: baby is born}
(1-\varepsilon) ^\frac{1}{2\gamma}s^{\pi^*}&=\frac{c+(1-\varepsilon)s}{c+1}.
\end{align}
\end{lemma}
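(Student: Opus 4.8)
\textbf{Proof plan for Lemma \ref{nec s c}.}

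The plan is to start from the explicit solution \eqref{eq: sol in c} of the initial value problem \eqref{new fbvp1x}--\eqref{new fbvp5x}, which already encodes the ODE together with the two conditions at $z=1$, and to translate the two remaining terminal conditions \eqref{new fbvp2x} and \eqref{new fbvp4x} into the two scalar equations \eqref{eq: its a boy}--\eqref{eq: baby is born}. First I would record the ``only if'' direction: given a solution triple $(\phi,c,s)$, the derivation leading to \eqref{eq: precessor} (divide \eqref{new fbvp1x} by $\phi'$, integrate using \eqref{new fbvp5x}, take antilogarithms, integrate again using \eqref{new fbvp3x}) is valid in the levered regime because there $c<0$, $\phi(z)>-c$ and $c+1>0$ keep all the logarithms well-defined and the exponents $1/(1-2\gamma)$ single-valued on the relevant branch. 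Evaluating \eqref{eq: precessor} at $z=s$ and substituting the terminal value $\phi(s)=(1-\varepsilon)s$ from \eqref{new fbvp2x} gives, after multiplying through by $(1-2\gamma)$ and dividing by $(c+1)^{1-2\gamma}$, exactly \eqref{eq: its a boy}. For \eqref{eq: baby is born}, I would instead use the once-integrated relation \eqref{eq: grandpa} evaluated at $z=s$: the left-hand side is $\phi'(s)/(c+\phi(s))^{2\gamma}=(1-\varepsilon)/(c+(1-\varepsilon)s)^{2\gamma}$ by \eqref{new fbvp2x}--\eqref{new fbvp4x}, while the right-hand side is $s^{-2\gamma\pi^*}/(c+1)^{2\gamma}$; rearranging and taking $2\gamma$-th roots (again legitimate since both bases are positive on this branch) yields \eqref{eq: baby is born}.

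For the ``if'' direction I would argue that, given $(c,s)$ solving \eqref{eq: its a boy}--\eqref{eq: baby is born}, the function $\phi$ defined by the closed form \eqref{eq: sol in c} is by construction a solution of the ODE \eqref{new fbvp1x} (it was obtained by integrating that ODE) satisfying $\phi(1)=1$ and $\phi'(1)=1$ automatically. It then remains to check that the two algebraic equations force $\phi(s)=(1-\varepsilon)s$ and $\phi'(s)=(1-\varepsilon)$: equation \eqref{eq: its a boy} is precisely \eqref{eq: precessor} at $z=s$ rewritten, so it pins down $\phi(s)$ to the value $(1-\varepsilon)s$ (using injectivity of $t\mapsto t^{1-2\gamma}$ on the positive branch to invert), and then \eqref{eq: baby is born} combined with \eqref{eq: grandpa} at $z=s$ delivers $\phi'(s)=(1-\varepsilon)$. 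One should note in passing that the hypotheses $\gamma\neq 1/2$ and $\mu/\sigma^2\neq 1/2$ are exactly what is needed for the two integration steps to produce the power-law forms \eqref{eq: precessor} (otherwise logarithms appear, corresponding to the ``singular cases'' set aside in the text), and the restriction to the levered branch $\zeta_-<\zeta_+<-1$ guarantees the sign conditions that make every fractional power and logarithm above unambiguous.

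The main obstacle I anticipate is purely bookkeeping of branches and signs rather than any real analytic difficulty: one must be careful that in the levered case the quantities $c+1$, $c+\phi(z)=c+(1-\varepsilon)s$, and $s$ are all of the correct sign so that $(c+1)^{2\gamma}$, $(c+\phi(z))^{1-2\gamma}$ and $s^{1-2\gamma\pi^*}$ denote the intended real branches, and that dividing by $\phi'$ is legitimate (i.e. $\phi'$ does not vanish on $(1,s)$, which follows from \eqref{eq: grandpa} since its right-hand side is nonzero). Once these sign checks are in place, both implications are a matter of substituting the terminal conditions into \eqref{eq: grandpa} and \eqref{eq: precessor} and rearranging, so the proof is short.
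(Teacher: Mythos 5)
Your proposal is correct and follows essentially the same route as the paper's proof: integrate the ODE twice using the initial conditions to obtain \eqref{eq: grandpa} and \eqref{eq: precessor}, then substitute the terminal conditions \eqref{new fbvp2x} and \eqref{new fbvp4x} and take $2\gamma$-th roots to arrive at \eqref{eq: its a boy}--\eqref{eq: baby is born}. The only difference is that you spell out the converse implication (recovering $\phi(s)=(1-\varepsilon)s$ and $\phi'(s)=1-\varepsilon$ via injectivity of the power map on the relevant branch), which the paper simply declares ``similar''.
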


\begin{proof}
The initial value problem \eqref{new fbvp1x}, \eqref{new fbvp3x} and \eqref{new fbvp5x}, parameterized in $c$, has the explicit solution \eqref{eq: sol in c}. What remains is to involve the boundary conditions
\eqref{new fbvp2x} and \eqref{new fbvp4x}. Starting from \eqref{eq: precessor} and using \eqref{new fbvp2x} yields
\begin{equation}
\frac{(c+(1-\varepsilon)s)^{1-2\gamma}}{1-2\gamma}=\frac{(s^{1-2\gamma\pi^*}-1)}{(1-2\gamma\pi^*)(c+1)^{2\gamma}}+\frac{(c+1)^{1-2\gamma}}{1-2\gamma},
\end{equation}
from which \eqref{eq: its a boy} follows.
Using \eqref{eq: grandpa}, \eqref{new fbvp2x} and \eqref{new fbvp4x} one obtains
\begin{equation*}
\frac{(1-\varepsilon)}{(c+(1-\varepsilon)s)^{2\gamma}}=\frac{s^{-2\gamma\pi^*}}{(c+1)^{2\gamma}}.
\end{equation*}
Taking the $2\gamma$ths root, one obtains \eqref{eq: baby is born}. The proof of the converse implication is similar.
\end{proof}
\section{Asymptotics of the free boundaries}
Recall that $\pi^*=\frac{\mu}{\gamma \sigma^2}$ and note that
\begin{equation}\label{eq: rueck trafo}
\zeta_-=\frac{1}{c},\quad\zeta_+=\frac{s}{c}
\end{equation}
and the associated trading boundaries $\pi_\pm$ satisfy
\begin{equation}\label{eq: rueckendeckung}
\pi_\pm :=\frac{\zeta_\pm}{1+\zeta_\pm}.
\end{equation}

We introduce the abbreviations
\[
\bar c:=\frac{1-\pi^*}{\pi^*},\quad\Delta:=\left(\frac{6}{\gamma\pi^*(1-\pi^*)}\right)^{1/3}\varepsilon^{1/3}.
\]
\begin{proposition}\label{th free}
For sufficiently small $\varepsilon>0$ the free boundary problem \eqref{new fbvp1x}--\eqref{new fbvp4x} has a unique solution $(h(\zeta),c,s)$. Moreover, 
 the following asymptotics hold as $\varepsilon\rightarrow 0$:
\begin{align}\label{c asymptotics}
c&=\bar c+\frac{1-\pi^*}{2\pi^*}\Delta+\frac{(1-\pi^*)(3-\pi^* (2\gamma+1))}{12\pi^*}\Delta^2\\\nonumber&\quad\quad\quad\quad-\frac{(\pi^* -1) \left(\left(4 \gamma ^2+22 \gamma +1\right) (\pi^*) ^2-24 (2 \gamma +1) \pi^* +36\right)}{360 \pi^* } \Delta^3+O(\varepsilon^{4/3}),\\\label{s asymptotics}
s&=1+ \Delta+\frac{\Delta^2}{2} + \frac{1}{180} \left(\left(4 \gamma ^2-8 \gamma +1\right) (\pi^*) ^2+3 (4 \gamma -3) \pi^* +36\right)\Delta^3\\\nonumber&\quad\quad\quad\quad +\frac{\left(8 \gamma ^2-26 \gamma +2\right) (\pi^*) ^2+2 (17 \gamma -9) \pi^* +27}{360}\Delta^4 +O(\varepsilon^{5/3}).
\end{align}
\end{proposition}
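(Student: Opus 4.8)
The plan is to treat the nonlinear system \eqref{eq: its a boy}--\eqref{eq: baby is born} of Lemma \ref{nec s c} as a two-dimensional fixed-point problem in the unknowns $(c,s)$ and to extract the asymptotic expansion by the implicit function theorem combined with a formal power-series Ansatz in the small parameter $\Delta = \left(\frac{6}{\gamma\pi^*(1-\pi^*)}\right)^{1/3}\varepsilon^{1/3}$. First I would observe that at $\varepsilon = 0$ the system degenerates: \eqref{eq: baby is born} forces $s^{\pi^*} = \frac{c+s}{c+1}$, which together with \eqref{eq: its a boy} is solved by $s = 1$, $c = \bar c = \frac{1-\pi^*}{\pi^*}$ (equivalently $\zeta_- = \zeta_+ = \frac{\pi^*}{1-\pi^*}$, the frictionless Merton ratio as it should be). The Jacobian of the system at this degenerate point is singular — both equations collapse to the same linear relation — which is exactly why the natural expansion parameter is $\varepsilon^{1/3}$ rather than $\varepsilon$: one must resolve the degeneracy at higher order. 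Concretely I would substitute $s = 1 + \sum_{k\ge 1} s_k \Delta^k$ and $c = \bar c + \sum_{k\ge 1} c_k \Delta^k$ into \eqref{eq: its a boy}--\eqref{eq: baby is born}, Taylor-expand both sides (the $(1-\varepsilon)$ factors being rewritten via $\varepsilon = \frac{\gamma\pi^*(1-\pi^*)}{6}\Delta^3$), and match powers of $\Delta$.

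The matching proceeds order by order. The key structural point is that the leading obstruction appears at order $\Delta^3$ in equation \eqref{eq: its a boy} and at order $\Delta^2$ or $\Delta^3$ after combining the two equations: the cancellations force $s_1^2$ (and hence $s_1$) to be determined by a quadratic whose positive root gives $s_1 = 1$, while $c_1$ is pinned to $\frac{1-\pi^*}{2\pi^*}$ by the linear part of \eqref{eq: baby is born}. Once the leading coefficients are fixed, the Jacobian governing the corrections becomes nondegenerate (the quadratic in $s_1$ has a simple root), so every subsequent coefficient $(c_k, s_k)$ is obtained uniquely by solving a triangular linear system whose right-hand side depends only on previously computed coefficients. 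This is the mechanism that simultaneously delivers existence, uniqueness (for $\varepsilon$ small), and the explicit values quoted in \eqref{c asymptotics}--\eqref{s asymptotics}. The function $h = \phi$ is then recovered from the closed-form solution \eqref{eq: sol in c} of the initial value problem with the computed value of $c$, so no separate argument is needed for it.

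To upgrade the formal expansion to a genuine asymptotic statement with a controlled remainder, I would invoke the implicit function theorem in the following guise: after dividing out the common degenerate factor, rewrite the system as $G(c, s, \Delta) = 0$ where $G$ is real-analytic near $(\bar c, 1, 0)$ and $\partial_{(c,s)} G$ is invertible at the base point \emph{once the $\Delta^{1/3}$-blowup has been absorbed into the variables} (i.e. working with $u := (s-1)/\Delta$ and $v := (c - \bar c)/\Delta$ as the true unknowns). The theorem then yields a unique analytic branch $(u(\Delta), v(\Delta))$, whose Taylor coefficients are precisely those computed above, and the $O(\varepsilon^{4/3})$, $O(\varepsilon^{5/3})$ error terms follow from truncating that convergent series. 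The main obstacle — and the only genuinely delicate bookkeeping — is the rescaling step: one must check that after pulling out the right power of $\Delta$ the residual system is nondegenerate, i.e. that the quadratic determining $s_1$ has a simple (not double) root, since a double root would signal a further fractional blow-up and invalidate the $\varepsilon^{k/3}$ scaling. Verifying this, and carrying the Taylor expansions of the power functions $(1-\varepsilon)^{1/(2\gamma)}$, $s^{1-2\gamma\pi^*}$, etc., consistently to order $\Delta^4$, is the computational heart of the argument and is where I would lean on MATHEMATICA, exactly as the paper indicates elsewhere; the singular parameter values $\gamma = 1/2$ and $\mu/\sigma^2 = 1/2$ excluded in Lemma \ref{nec s c} are handled by the separate, simpler closed forms of \eqref{eq: grandpa} and do not affect the generic expansion.
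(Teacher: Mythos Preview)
Your proposal is correct and would succeed, but the desingularization step is organised differently from the paper's.

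The paper does \emph{not} rescale. Instead it observes that combining the two terminal conditions yields the relation $\phi(s,c)-s\,\phi'(s,c)=0$, which is \emph{independent of $\varepsilon$}. Dividing out the trivial factor $(s-1)$ gives an analytic equation in $(s,c)$ alone whose $s$-derivative at $(1,\bar c)$ equals $-\phi'''(1,\bar c)/2=2\gamma\pi^*(1-\pi^*)\neq 0$; the implicit function theorem then produces $s=H(c)$ with $H(\bar c)=1$ directly, with no nonlinear branch to select. Substituting $s=H(c)$ into \eqref{eq: baby is born} yields a scalar equation $g(c,\delta)=0$ to which a second, ordinary application of the implicit function theorem gives the unique analytic branch $c=c(\delta)$. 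The coefficient extraction is then exactly your power-series matching.

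Your blowup $(u,v)=((s-1)/\Delta,(c-\bar c)/\Delta)$ is the generic way to handle the degeneracy and does not exploit the $\varepsilon$-free combination; the price is that the limiting system at $\Delta=0$ is genuinely nonlinear (your ``quadratic''), so you must first locate the correct root and then check simplicity before the implicit function theorem applies. The paper's two-step elimination trades this for two successive \emph{linear} IFT applications, which is cleaner here precisely because one combination of the boundary data happens to drop $\varepsilon$. Your route is more portable to problems lacking such a lucky combination; the paper's is shorter for this particular system. Either way the formal matching that produces \eqref{c asymptotics}--\eqref{s asymptotics} is identical.
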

\begin{proof}
The proof is inspired by \cite[Proposition 6.1]{gerhold2013dual}, where a similar result is developed for log-utility from consumption and for unlevered strategies.\footnote{Similar methods to derive asymptotic expansions in small transaction costs are found in the papers \cite{gerhold.al.11,gerhold2012asymptotics,gm2020,gm2023}.} Having already solved the initial value problem \eqref{new fbvp1x}, \eqref{new fbvp3x} and \eqref{new fbvp5x}, parameterized in $c$, which has the explicit solution \eqref{eq: sol in c} it remains to involve the boundary conditions
\eqref{new fbvp2x} and \eqref{new fbvp4x}. A na\"ive approach would be to define
for sufficiently small $\delta$, the map $F:=(F_1,F_2)^\top$, where
\begin{align}\label{eq: sys1}
F_1(\delta,c,s)=&\left(\frac{c+(1-\delta^3)s}{c+1}\right)^{1-2\gamma}-1-\frac{1-2\gamma}{1-2\gamma\pi^*}\frac{s^{1-2\gamma\pi^*}-1}{c+1},\\\label{eq: sys2}
F_2(\delta,c,s)=&(1-\delta^3) ^\frac{1}{2\gamma}s^{\pi^*}-\frac{c+(1-\delta^3)s}{c+1},
\end{align}
and to show, by means of the implicit function theorem, that $F$ has a unique zero $(s(\delta), c(\delta))$ at $(c=\bar c, s=1)$ which is analytic in $\delta$. Note however that the implicit function theorem cannot be applied in this case: even though
$F(\delta_0=0, c_0=\bar c, s_0=s)=0$, the Jacobian vanishes at the critical point $(0,\bar c,1)$.

Consider the levered case only, as the other case can be proved quite similarly. In this case $s<1$. Having a look at eq. \eqref{eq: grandpa}, one sees that for $z<1$, $z$ sufficiently close to $z=1$, $\phi'(z)>0$ and since $\phi(1)=1$ this implies $\phi(z)<1$. Since $\phi=\phi(z,c)$ is an analytic function in $(z,c)$ near $c=\bar c$ and $z=1$, it satisfies an expansion of the form
\[
\phi(z,c)=1+(z-1)+\sum_{i\geq 2}\sum_{j\geq 0}a_{ij}(z-1)^i(c-\bar c)^j
\]
with coefficients $a_{ij}$ which can be calculated recursively. Furthermore, $a_{0j}=a_{1j}=\delta_{0j}$ for $j\geq 0$ due to
the initial conditions \eqref{new fbvp3x} \& \eqref{new fbvp5x}. One now solves for $c,s$ invoking the terminal conditions
\eqref{new fbvp2x} \& \eqref{new fbvp4x}. The latter imply that
\[
\varepsilon s=s-\phi(s,c),\quad\textrm{and}\quad \phi(s,c)-s \phi'(z=s,c)=0.
\]
Dividing by $s-1$, reflecting that the solution $s=1$ is not interesting, a Taylor expansion
yields
\begin{equation}\label{2eqs}
\frac{\phi(s,c)-s  \phi'(s,c)}{s-1}=\sum_{i\geq 0}\sum_{j\geq 0}b_{ij}(s-1)^i(c-\bar c)^j
\end{equation}
for certain coefficients $b_{ij}$. By using  \eqref{new fbvp1x}, \eqref{new fbvp3x} \& \eqref{new fbvp5x}
and L'Hospital's rule, one gets
\[
b_{0,0}=\lim_{z\rightarrow 1, c\rightarrow \bar c}\frac{\phi(z,c)-z  \phi'(z,c)}{z-1}=-\phi''(1,\bar c)=0,
\]
and, further by a twofold application of L'Hospital's rule,
\[
b_{1,0}=\lim_{z\rightarrow 1, c\rightarrow \bar c}\frac{\phi(z,c)-z  \phi'(z,c)}{(z-1)^2}=-\lim_{z\rightarrow 1, c\rightarrow \bar c}\phi^{(3)}(z,c)=2\gamma\pi^*(1-\pi^*)\neq 0.
\]
Hence the implicit function theorem is applicable and yields $s(c)=H(c)$ as a function of $c$ such that
\[
H(\bar c)=1,\quad H'(\bar c)=\frac{2\pi^*}{1-\pi^*}.
\]
Inserting this function into \eqref{eq: baby is born} one obtains
the problem
\[
g(c,\delta):=(1-\delta^3)^{\frac{1}{2\gamma}}H^{\pi^*}(c)-\frac{c+(1-\varepsilon) H(c)}{c+1}=0.
\]
Since $g(c=\bar c, \delta=0)=0$ and $\partial_c g(c,\delta)=\frac{1}{\pi^*}\neq 0$, one can apply the implicit
function theorem which asserts that for sufficiently small $\delta$, a unique and analytic solution $c=c(\delta)$
exists to $g(c,\delta)=0$ and $c(0)=\bar c$. Therefore,  $c(\delta), s(\delta)=H(c(\delta))$ is the unique solution 
of our problem for small $\delta$.

Finally, one derives the asymptotic formulas \eqref{c asymptotics}, \eqref{s asymptotics}: Let $(\phi,c,s)$ be the unique solution of \eqref{new fbvp1x}--\eqref{new fbvp4x}.  Due to Lemma \ref{nec s c}, $s$ and $c$ satisfy the system \eqref{eq: its a boy}--\eqref{eq: baby is born}. Substitute $c=c(s)$ from \eqref{eq: baby is born} into \eqref{eq: its a boy}, and replace $\varepsilon$ by $\delta^3$
in all equations. Then one plugs into the modified equation \eqref{eq: its a boy} a power series Ansatz for $s$, namely $s=1+s_1 \delta+\dots s_6 \delta^6$. Developing both sides as power series in $\delta$ and comparing coefficients leads to \eqref{s asymptotics}. This result is then plugged into \eqref{eq: baby is born}, yielding quite similarly, \eqref{c asymptotics}.
\end{proof}
\begin{remark}\label{rem: free}
Using the formulae \eqref{eq: rueck trafo}, the asymptotics \eqref{eq: asymptotics zeta shadow} for the trading boundaries $\widetilde\zeta_\pm$ in terms of the risky-safe ratio follow from the asymptotics of Proposition \ref{th free}. The asymptotics \eqref{eq: trading boundaries shadow} for $\widetilde \pi_\pm$ then follows from the relationship \eqref{eq: rueckendeckung}.
\end{remark}

\end{document}